\newtheorem{thm}{Theorem}[section]
\newtheorem{prop}[thm]{Proposition}
\newtheorem{coro}[thm]{Corollary}
\newtheorem{lemma}[thm]{Lemma}
\newtheorem{defn}{Definition}[section]
\newtheorem{assmp}{Assumption}[section]
\newtheorem{remark}{Remark}[section]
\newenvironment{proof}{\hspace{0ex}\textsc{Proof}:\hspace{1ex}}{\hfill$\Box$\\[2ex] }
\numberwithin{equation}{section} 
\renewcommand{\geq}{\geqslant}
\renewcommand{\leq}{\leqslant}
\newcommand{\citethm}[1]{Theorem \ref{#1}}
\newcommand{\citecoro}[1]{Corollary \ref{#1}}
\newcommand{\citelem}[1]{Lemma \ref{#1}}
\newcommand{\opfont}{\mathbb}
\newcommand{\EE}{{\mathbb E}}
\newcommand{\cF}{{\mathcal F}}
\newcommand{\as}{\mbox{{\rm a.s.}}}
\newcommand{\BE}[2][]{\ensuremath{\operatorname{\opfont{E}}^{#1}\!\left[#2\right]}}
\newcommand{\BP}[2][]{\ensuremath{\operatorname{\opfont{P}}^{#1}\!\left(#2\right)}}
\newcommand{\BF}{\mathbb{F}}
\newcommand{\R}{\ensuremath{\operatorname{\mathbb{R}}}}
\newcommand{\argmax}{\ensuremath{\operatorname*{argmax}}}
\newcommand{\dd}{\ensuremath{\operatorname{d}\! }}
\newcommand{\dt}{\ensuremath{\operatorname{d}\! t}}
\newcommand{\ds}{\ensuremath{\operatorname{d}\! s}}
\newcommand{\dy}{\ensuremath{\operatorname{d}\! y}}
\newcommand{\dz}{\ensuremath{\operatorname{d}\! z}}
\newcommand{\setr}{\mathcal{R}}
\newcommand{\idd}[1]{\ensuremath{\operatorname{\mathds{1}}_{#1}}}
\begin{document}
\title{Distributionally robust goal-reaching optimization in the presence of background risk}

\author{Yichun Chi\thanks{China Institute for Actuarial Science, Central University of Finance and Economics, Beijing 102206, China. Chi was supported by grants from the National Natural Science Foundation of China (Grant No. 11971505) and 111 Project of China (No. B17050). Email: \url{yichun@cufe.edu.cn}. }
\and Zuo Quan Xu\thanks{Department of Applied Mathematics, The Hong Kong Polytechnic University, Kowloon, Hong Kong, China. Xu was partially supported by grants from NSFC (No.~11971409), Hong Kong GRF (No.~15202817 and No.~15202421), the PolyU-SDU Joint Research Center on Financial Mathematics and the CAS AMSS-POLYU Joint Laboratory of Applied Mathematics. Email: \url{maxu@polyu.edu.hk}. }
\and Sheng Chao Zhuang\thanks{Corresponding author. Department of Finance, University of Nebraska-Lincoln, Lincoln, Nebraska, 68588, United States. Email: \url{szhuang3@unl.edu}.}}

\maketitle

\begin{abstract}
In this paper, we examine the effect of background risk on portfolio selection and optimal reinsurance design under the criterion of maximizing the probability of reaching a goal. Following the literature, we adopt dependence uncertainty to model the dependence ambiguity between financial risk (or insurable risk) and background risk. Because the goal-reaching objective function is non-concave, these two problems bring highly unconventional and challenging issues for which classical optimization techniques often fail. Using quantile formulation method, we derive the optimal solutions explicitly. The results show that the presence of background risk does not alter the shape of the solution but instead changes the parameter value of the solution. Finally, numerical examples are given to illustrate the results and verify the robustness of our solutions.

\bigskip

\noindent {\bf Keywords: } Background risk; Dependence uncertainty; Distributionally robust optimization; Goal-reaching; Quantile formulation
\end{abstract}

\newpage
\section{Introduction}\label{introduction}
In financial and insurance markets, many objectives of investment or risk management are related solely to the success of specific goals. For example, the practice of benchmarking in institutional money management is quite widespread. In the evaluation of the portfolio manager's investment, how his/her performance outperforms that of a given benchmark is an important indicator. A typical example of the benchmark is Standard \& Poor's 500 index. Generally speaking, there are two types of portfolio management: a passive one and an active one \citep[see, e.g.,][]{Sharpe1999, Maginn2007}. With the former, a manager keeps track of an index, while with the latter, the manager tries to beat the return of the predetermined benchmark. As passive portfolio management can simply invest directly in the benchmark for all intents and purposes, we focus on the active portfolio management decision that attempts to maximize the probability of beating the passive portfolio management. This is often called a ``goal-reaching problem'' \citep[see, e.g.,][]{Browne1999, Browne2000}. Notably, the goal-reaching problem is also commonly faced by insurance companies. \citet{Gajek2004} argue that a clear aim for the insurers' risk protection is to decrease the probability of ruin or, equivalently, increase the survival probability. \citet{Bernard2009} also emphasize that the insurers often transfer part of risks to reinsurers to minimize the probability of insolvency under the regulation of Solvency II. Therefore, analyzing goal-reaching problems is very useful in finance and insurance.
\par

In the academic literature, the theory of background risk has attracted a great deal of attention for its practical importance. As \citet{Pratt1988} notes,
\begin{center}
\raggedright
{\it ``Most real decision makers, unlike those portrayed in our popular texts and theories, confront several uncertainties simultaneously. They must make decisions about some risks when others have been committed to but not resolved. Even when a decision is to be made about only one risk, the presence of others in the background complicates matters.''}
\end{center}
Specifically, for individuals, such non-hedgeable risks can arise from their uncertain labour income, uncertain tax liabilities, real estate investments, unexpected expenses due to health issues, and so forth. For financial institutions, operational risk and environmental risk represent types of background risk in portfolio management. For insurance companies, background risk can stem from their investment risk, economic risk, operational risk, underwriting risk from other lines, and so on. Thus, it is rational to make decisions by taking background risk into account.
\par

A large body of research has examined the effect of background risk on decision making in finance and insurance (for a review, see \citet{Gollier2001}). Some studies assume an independent background risk and then examine its impact on the economic agent's behavior. For example, \citet{Kimball1993} shows that decreasing absolute risk aversion (DARA) and decreasing absolute prudence (DAP) are sufficient to guarantee that people will be less willing to accept another independent zero-mean background risk to their wealth. \citet{GollierPratt1996} further demonstrate that DARA and DAP are sufficient for risk vulnerability, which implies that any zero-mean background risk raises the degree of risk aversion to any other independent risk. Moreover, \citet{wengzhuang2017} investigate the optimal reinsurance design with an independent background risk in a goal-reaching model.

In addition, some studies have taken into account a correlated background risk. For example, \citet{Doherty1983} find that the optimal insurance purchase relies heavily on the dependence between the insurable risk and the background risk. In particular, when a background risk is incorporated, the well-known Arrow's theorem on the optimality of the deductible and Mossin's theorem are shown to be held only under a very restricted market and risk conditions. Under the same expected utility framework, \citet{Gollier1996} demonstrates that the stop-loss contract is optimal if the insurable risk and the background risk are independent and that the optimal solution changes to be the disappearing deductible when the background risk is stochastically increasing with respect to the insurable risk in the sense of convex order. \citet{Dana2007} and \citet{Chi2018} further conclude that different stochastic dependence assumptions between the insurable risk and the background risk can lead to different qualitative properties of the optimal insurance contract. On the other hand, \citet{Tsetlin2005} consider a general risk choice model with a correlated background risk and show that a major factor in a project decision is whether the project risk is positively or negatively related to the background risk. \citet{Denuit2011} investigate the impact of the correlation between financial risk and background risk on optimal choices.

All these studies clearly indicate that the dependence structure plays a critical role in aggregating the risks and thus affects decision making. In practice, although there exist accurate and efficient approaches to model the marginal distributions of the risks, their dependence structure is very hard to capture due to many computational and convergence issues with statistical inference of multidimensional data.\footnote{\citet{Embrechts2014} give a very detailed discussion on why estimating the dependence structure between risks is statistically and computationally challenging.} \citet{Embrechts2015} argue that modeling a high-dimensional dependence structure is typically data-costly, and in many realistic settings there is often not enough joint data to make reliable inference for the dependence structure between individual risks.\footnote{\citet{Wuthrich2003} emphasizes that, usually in actuarial problems, it is difficult to have a good intuitive feeling for the dependence structure and one has not enough data to really analyze the dependence structure. \citet{McNeil2015} claim that, in the foreseeable future, the lack of operational loss data is a big concern in modeling the dependence structure.} Due to the lack of joint data, the dependence structure is often chosen arbitrarily in finance and insurance. As emphasized by regulatory expert opinion (see, e.g., Basel Fundamental Review of the Trading Book or Solvency II), risk aggregation formulas used for regulation are all ad-hoc and neglect the actual dependence between risks. It is worth noting that, an inappropriate dependence assumption can lead to very serious consequences. For example, using the multivariate Gaussian model can result in extremely underestimating the default probability in a large basket of firms \citep[]{McNeil2015}. In the literature, this scenario with unavailable or unreliable dependence information is referred to as {\it dependence uncertainty}. More detailed discussions on this topic can be found in \citet{Embrechts2013}, \cite{Bernardl2014}, \citet{Cai2018}, \cite{Wang2019}, and references therein.

To address the dependence uncertainty, a classical approach is to formulate a minimax/maximin decision problem, which explores the best decision under a worst-case scenario. This approach can be traced back at least to Wald's minimax criterion \citep[]{Wald1950}. In a seminal study, \cite{Ellsberg1961} shows that in a situation where probability distributions cannot be completely specified, evaluating the worst-case scenario of all the plausible distributions seems appeal to a conservative decision maker. There is ample empirical evidence supporting Ellsberg's observation. Such an observation is in line with many psychological studies indicating that most of decision makers display a very low tolerance toward uncertainty \citep[see, e.g.,][]{Furnham2013}.

In this paper, we use the same approach to model the dependence uncertainty and attempt to study the optimal distributionally robust decisions that perform best in worst-case situations. In particular, we examine two goal-reaching problems with background risk, focusing respectively on portfolio selection and optimal reinsurance design.\footnote{Notably, the robust (or worst-case scenario) portfolio selection and optimal (re)insurance design have been explored in the literature. For example, \citet{Balbas2016} consider a robust portfolio selection problem by minimizing the risk for a given guaranteed expected return. \citet{houxu2016} analyze a distributionally robust portfolio selection problem under a mean-variance framework instead of goal-reaching. \citet{Asimit2017} investigate the optimal insurance contract with model risk in the robust optimization sense. } It is worth noting that the goal-reaching objective function is non-concave, and therefore these two problems are highly unconventional and challenging, such that traditional optimization techniques generally fail. Furthermore, the presence of background risk poses an additional technical hurdle for solving the robust problems analytically. Considering the similarity of these two problems, we apply the quantile formulation approach in both settings to overcome the difficulty and explicitly derive the optimal solutions. More precisely, we explore the investor's robust portfolio choice with background risk under dependence uncertainty and compare the solution with the case in the absence of background risk analyzed in \citet{hezhou2011}. We also compare optimal reinsurance contracts with and without background risk, adopting the dependence uncertainty when background risk is incorporated. By comparison, we find that the presence of background risk does not affect the shape of the solution but instead changes its parameter values (see Remarks \ref{remark:portfolio} and \ref{remark:reinsurance}).

The numerical study in our paper may be of independent interest. In recent years, the success of robustness in distributionally robust optimization has received increasing attention because of the concern that the robust decisions might be very conservative. In practice, how to measure the performance of robustness is a very complicated yet important problem \citep[see, e.g.,][]{Huang2010}. By borrowing a novel idea from some classic literature on this subject \citep[see, e.g.,][]{Zhu2009}, we perform some numerical analysis to demonstrate the worthiness of our optimal contracts. Such a mechanism may provide a new insight for actuarial practitioners and academics who seek the robust decisions.

The rest of the paper proceeds as follows. Section \ref{portfolio:selection} formulates a robust portfolio selection problem with a background risk and we solve it by the quantile formulation approach. In Section \ref{reinsurance:design}, a similar technique is applied to analyze the optimal reinsurance design, and numerical examples are given to illustrate the results and test the robustness of the optimal solutions. Section \ref{Conclusion} concludes the paper. The appendix, which supports Section~\ref{portfolio:selection} and Section \ref{reinsurance:design}, provides the analysis of two distributional robust probabilistic problems, and investigates the optimal reinsurance design when the insurable risk and background risk are comonotonic.

\subsubsection*{Notation}
Throughout the paper, we assume an atom-less probability space $(\Omega, \BF, \mathbb{P})$.
For any random variable $X$ defined in this probability space, we denote its cumulative distribution function by
$F_{X}(x)=\BP{X\leq x}$ for all $x\in\R$. Using $F_{X}(x)$, we can define the quantile function of $X$ as
\[F_{X}^{-1}(t):=\inf\{z\colon F_{X}(z)\geq t\},\ \ t\in(0,1],\]
with the convention $\inf\emptyset=+\infty$.\footnote{It should be emphasized that sometimes we have to extend the domain of $F_X^{-1}(t)$ to $(0,\infty)$ by setting $F_{X}^{-1}(t)=+\infty$ for $t>1$.}
Note that all the quantile functions are increasing\footnote{Throughout the paper, the terms ``increasing'' and ``decreasing'' mean ``non-decreasing'' and ``non-increasing'', respectively.} and left continuous. By definition, we immediately have
\[F_{X}(F_{X}^{-1}(t))\geq t,\ \ t\in(0,1]\]
and
\[F_{X}^{-1}(F_{X}(x))\leq x,\ \ x\in\R.\]
These two inequalities will be used in the subsequent analysis without claim.
As $X\geq 0$ almost surely is equivalent to $F_X^{-1}(0^{+})\geq 0$, it is easy to show that the set of quantile functions of nonnegative random variables can be expressed as
$$ \mathbb{G}:=\big\{G(\cdot): (0,1]\to [0,\infty]\ \text{is increasing \ and \ left \ continuous} \big\}.$$
We further use $A^\top$ to denote the transpose of a matrix or vector $A$. In particular, when $A$ is a vector, we set $|A|=\sqrt{A^\top A}$.

\section{Portfolio selection}\label{portfolio:selection}

In this section, we set up a continuous-time financial market that is correlated with a background risk.
Then we investigate the robust portfolio choice by formulating a goal-reaching model.

\subsection{Model setting}
Let
$$\mathcal{W}(t):=(\mathcal{W}^1(t),\cdots,\mathcal{W}^n(t))^\top,\ \ t\geq 0$$
be an $n$-dimensional standard Brownian motion, which is defined on the probability space $(\Omega,\BF, \mathbb{P})$. It constitutes the risk from the financial market.
Let $({\mathcal F}_t)_{t\geq 0}$ be the natural filtration generated by the Brownian motion and augmented by all
$\mathbb{P}$-null sets, namely, $\cF_t=\sigma\{\mathcal{W}(s):0\leq s\leq t\}\vee\mathcal{N}$ for any $t\geq 0$, where $\mathcal{N}$ denotes the set of all $\mathbb{P}$-null sets.
Let $T>0$ denote a fixed terminal time of investment. Because the background risk we consider is from outside the financial market, it is essential to assume $\cF_T\subsetneq \BF$. In other words, the background risk is $\BF$-measurable but may not be $\cF_T$-measurable.

Following \citet{Karatzas1998}, we define a continuous-time financial market, which involves
$m+1$ assets being traded continuously. One of these assets
is the \emph{bond}, whose price $S_0(t)$ follows an ordinary differential
equation
\begin{equation}\label{bond}
	\left\{\begin{array}{ll}\dd S_0(t)=r(t)S_0(t)\dt,\;\; t\in[0,T], \medskip\\
	S_0(0)=s_0>0,\end{array}\right.
\end{equation}
where $r(\cdot)$, the appreciation rate of the bond, is an $\cF_t$-progressively measurable and
scalar-valued stochastic process with $\int_0^T|r(s)|\ds<+\infty$ almost surely. Other $m$ assets are \emph{stocks}
and their price processes $\{S_i(t): i=1,\cdots, m\}$ are driven by the following stochastic
differential equations (SDEs):
\begin{equation}\label{security}
	\left\{\begin{array}{ll}\dd S_i(t)=S_i(t)\left[b_i(t)\dt+\sum_{j=1}^n\sigma_{ij}(t)\dd \mathcal{W}^j(t)\right],\;\;
	t\in[0,T], \medskip \\
	S_i(0)=s_i>0,\end{array}\right.
\end{equation}
where $b_i(\cdot)$ and $\sigma_{ij}(\cdot)$, the appreciation rate and
volatility rate respectively, are scalar-valued and $\cF_t$-progressively measurable
stochastic processes with $$\int_0^T\Big[\sum_{i=1}^m|b_i(t)|+\sum_{i=1}^m\sum_{j=1}^n \sigma_{ij}(t)^2\Big]\dt<+\infty,\ \ \as.$$

Building upon $b_i(\cdot)$, $r(t)$ and $\sigma_{ij}(\cdot)$, we can define the excess return rate vector process
$$B(t):=(b_1(t)-r(t),\cdots,b_m(t)-r(t))^\top$$ and the volatility matrix process
$\sigma(t):=(\sigma_{ij}(t))_{m\times n}$. The following assumption, which ensures the market to be arbitrage-free and complete \citep[see, e.g.,][]{Karatzas1998,Duffie2010}, is imposed throughout this section.
\begin{assmp}\label{as:noarbitrage}
	There exists a unique ${\cal F}_t$-progressively measurable and $\mathbb{R}^n$-valued process $\theta_0(\cdot)$ such that $\BE{e^{\frac{1}{2}\int_0^T|\theta_0(t)|^2\dt}}<+\infty$ and
	\begin{align*}
		\sigma(t)\theta_0(t)=B(t),\ \ \text{a.s.}
	\end{align*}
for almost everywhere $t\in[0,T]$.
\end{assmp}

Consider an investor, with an initial endowment $x_0>0$ and an investment horizon $[0,T]$. We assume that the share trading takes place continuously in a self-financing manner (i.e., there are no incoming or outgoing cash flows during the investment horizon) and that the market is frictionless (i.e., there are no limitations on the trading size of assets and no transaction costs). We denote by $x(t)$ the investor's total wealth at time $t\in [0,T]$, then $x(t)$ evolves according to an SDE \citep[see, e.g.,][]{Karatzas1998}
\begin{equation}\label{system}
	\left\{\begin{array}{ll}\dd x(t)=\left[r(t)x(t)+B(t)^\top\pi(t)\right]\dt+\pi(t)^\top\sigma(t)\dd \mathcal{W}(t),\;\;t\in[0,T], \vspace{0.2cm} \\
	x(0)=x_0,\end{array}\right.
\end{equation}
where $\pi_i (t)$ denotes the amount of the investor's wealth invested in stock $i$ at time $t$ for $i=1,2\cdots,m$.
We call the process $$\pi(t):=(\pi_1(t),\cdots,\pi_m(t))^\top,\ \ t\in[0,T]$$ an admissible portfolio if
it is
$\cF_t$-progressively measurable with
\begin{eqnarray*}
	\int_0^T|\sigma(t)^\top\pi(t)|^2\dt<+\infty \;\mbox{ and }
	\; \int_0^T|B(t)^\top\pi(t)|\dt<+\infty,\;\;\as
\end{eqnarray*}
and is tame (i.e., the corresponding discounted wealth process, $S_0(t)^{-1}x(t)$, $t\in[0,T]$, is almost surely bounded from below, though the bound may depend on $\pi(\cdot )$). It is standard in the literature on continuous-time portfolio selection to assume that admissible portfolios are tame, so as to exclude the notorious doubling strategy.

Define the pricing kernel
\begin{equation}\label{rho}
\rho:=\exp\left\{-\int_0^T\left[r(s)+\frac{1}{2}|\theta_0(s)|^2\right]\ds-\int_0^T\theta_0(s)^\top \dd\mathcal{W}(s)\right\},
\end{equation}
where $\theta_0(\cdot)$ is the unique market price of risk.
Then, using the standard martingale approach \citep[see, e.g.,][]{Karatzas1998}, \citet{hezhou2011} investigate the following goal-reaching portfolio selection problem
\begin{align}\label{hz1}
\sup_{X}& \;\BP{X\geq \xi}\\\nonumber
\mathrm{s.t.}&\;\BE{\rho X}\leq x_{0}, \ X\geq 0, \ X \ \text{is} \ \mathcal{F}_T\text{-measurable},
\end{align}
where $\xi>0$ is the constant goal (level of wealth) intended to be reached by time $T$, $\BE{\rho X}\leq x_{0}$ is the budget constraint, and $X\geq 0$ is the no-bankruptcy constraint.\footnote{The martingale approach can be labelled as a two-step scheme. It first identifies the optimal payoff $X^*$ by solving Problem \eqref{hz1} and then derives the optimal portfolio $\pi(\cdot)$ by replicating the optimal final payoff $X^*$, where the theory of backward SDE is applied. See \citet{Bielecki2005} for more details on this approach. As the second step is rather standard, we focus only on the first step in this paper.} To the best of our knowledge, the goal-reaching problem was proposed by \citet{Kulldorff1993} and investigated extensively by \citet{Browne1999, Browne2000}.

In this paper, we extend Problem \eqref{hz1} to incorporate a background risk in the terminal wealth and analyze a robust portfolio selection problem with dependence uncertainty
\begin{align}\label{hz1111}
	\sup_{X}\;\inf_{Y\sim F_{0}}&\;\BP{X-Y\geq \xi}\\\nonumber
	\mathrm{s.t.}&\;\BE{\rho X}\leq x_{0},\ X\geq 0, \ X \ \text{is} \ \mathcal{F}_T\text{-measurable}.
\end{align}
Here, $Y$ is $\BF$-measurable and represents the background risk, whose distribution function $F_0$ is known, but whose relationship with the financial market is unclear because our assumption that $\cF_T\subsetneq \BF$.

Notably, \citet{Bernard2015} also study a portfolio selection problem under the goal-reaching model with state-dependent preferences in the sense that they seek an optimal payoff $X$ which has a desired dependence with a benchmark asset $A_T$. Our model is different from theirs primarily in two aspects. First, we seek an optimal distributionally robust payoff in Problem \eqref{hz1111}; that is, this payoff is feasible in the worst-case scenario. Second, the objective of Problem \eqref{hz1111} involves a background risk $Y$, while theirs does not.

To simplify the analysis, we further make the following assumption:
\begin{assmp}\label{assumption:backgroundrisk}
The cumulative distribution function $F_{0}$ is continuous.
\end{assmp}

\subsection{Optimal solutions}
In this subsection, we aim to solve Problem \eqref{hz1111} explicitly. By setting $\overline{Y}:=Y+\xi$, this problem is equivalent to
\begin{align}\label{hz11111}
\sup_{X}\;\inf_{\overline{Y} \sim F_{1}}&\;\BP{X \geq \overline{Y}}\\\nonumber
\mathrm{s.t.}&\;\BE{\rho X}\leq x_{0},\ X\geq 0,
\end{align}
where $F_1(z)=\BP{\overline{Y}\leq z}=\BP{Y\leq z-\xi}=F_0(z-\xi).$

Under Assumption~\ref{assumption:backgroundrisk}, it follows from \citecoro{coro1} and \citecoro{coro:prob:equality} that the inner optimal value of Problem \eqref{hz11111} can be equivalently expressed as
\begin{align}\label{objec:hz11111:11}
\inf_{ \overline{Y}\sim F_{1}} \;\mathbb{P}(X\geq \overline{Y})=\sup_{z\in\R} \big(F_{1}(z)-F_{X}(z)\big).
\end{align}
This quantity is neither convex nor concave in $X$. As a result, normal optimization techniques largely fail, and we instead employ the quantile formulation approach to overcome this difficulty. For this reason, we impose the following technical assumption:
\begin{assmp}\label{assmp-rho}
The pricing kernel $\rho$ is atomless, i.e., its distribution function is continuous.
\end{assmp}
This assumption is rather standard in the portfolio selection literature \citep[see, e.g.,][]{jinzhou2008,hezhou2011,hejinzhou2015}. It is worth mentioning that, this assumption is satisfied when the investment opportunity set $(r(\cdot), b(\cdot), \sigma(\cdot))$ is deterministic, in which case $\rho$ follows a lognormal distribution (that is the case with the Black-Scholes market).

Now let us focus on the constraint of Problem \eqref{hz11111}. Note that the quantity in \eqref{objec:hz11111:11} is increasing and law-invariant in $X$. Using the same argument as in \citet{hezhou2011}, \citet{Xu2016} and \citet{Ruschendorf2020}, we can show that the optimal solution should be anti-comonotonic with the pricing kernel $\rho$ so that
\[X=F_{X}^{-1}(1-F_{\rho}(\rho)).\]
With the help of this relation, the budget constraint $\BE{\rho X}\leq x_{0}$ reads
\[\int_{0}^{1}F_{X}^{-1}(t)F^{-1}_{\rho}(1-t)\dt\leq x_{0}.\]
In addition, we have $F_{X}^{-1}\in \mathbb{G}$ because of $X\geq 0$.

Next regarding the objective function of Problem \eqref{hz11111}, we need to rewrite it in terms of $F_{X}^{-1}$ as follows.
\begin{lemma}
	Under Assumption \ref{assumption:backgroundrisk}, we have
	\begin{align}\label{objective:hz11111}
	\inf_{ \overline{Y} \sim F_{1}} \;\mathbb{P}(X\geq \overline{Y})=\sup_{z\in\R} \big(F_{1}(z)-F_X(z)\big)=\sup_{t\in(0,1]} \big(F_{1}(F_X^{-1}(t))-t\big).
	\end{align}
\end{lemma}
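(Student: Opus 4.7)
The first equality in \eqref{objective:hz11111} is already established by \citecoro{coro1} together with \citecoro{coro:prob:equality} under the continuity of $F_1$ (which follows from Assumption \ref{assumption:backgroundrisk} via $F_1(z)=F_0(z-\xi)$), so the real work is to prove the second equality. My plan is to set $L:=\sup_{z\in\R}(F_1(z)-F_X(z))$ and $R:=\sup_{t\in(0,1]}(F_1(F_X^{-1}(t))-t)$ and then to establish $L\geq R$ and $L\leq R$ separately. Both directions rest on the galois-type equivalence $F_X^{-1}(t)\leq z \Leftrightarrow t\leq F_X(z)$, combined with the continuity of $F_1$, which lets me commute $F_1$ with one-sided limits of its argument.

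For $L\geq R$, I will fix $t\in(0,1]$, set $z:=F_X^{-1}(t)$, and pick any sequence $z_n\uparrow z$. By the defining infimum in $F_X^{-1}(t)$, the strict inequality $F_X(z_n)<t$ holds for every $n$, whence $F_1(z_n)-F_X(z_n)>F_1(z_n)-t$. Because $F_1$ is continuous, $F_1(z_n)\to F_1(z)$, so letting $n\to\infty$ yields $L\geq F_1(z)-t=F_1(F_X^{-1}(t))-t$, and taking the supremum in $t$ gives $L\geq R$.

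For $L\leq R$, I will fix $z\in\R$ and show $R\geq F_1(z)-F_X(z)$. In the principal case $F_X(z)<1$, I pick $t:=F_X(z)+\epsilon$ for $\epsilon>0$ small enough that $t\in(0,1]$; the galois equivalence gives $F_X^{-1}(t)>z$, so monotonicity of $F_1$ implies $F_1(F_X^{-1}(t))\geq F_1(z)$, whence $R\geq F_1(z)-F_X(z)-\epsilon$, and sending $\epsilon\to 0^+$ closes this case. In the boundary case $F_X(z)=1$ one has $F_1(z)-F_X(z)\leq 0$, so it suffices to show $R\geq 0$; this I obtain by sending $t\to 0^+$, using the constraint $X\geq 0$ of Problem \eqref{hz11111} to get $F_X^{-1}(0^+)\geq 0$, and then continuity of $F_1$ to conclude $F_1(F_X^{-1}(t))-t\to F_1(F_X^{-1}(0^+))\geq F_1(0)\geq 0$.

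The main technical nuance I expect is in the $L\geq R$ direction when $F_X$ has a jump at $z=F_X^{-1}(t)$: the approximating sequence $z_n\uparrow z$ only satisfies $F_X(z_n)\leq F_X(z^-)<t\leq F_X(z)$, and the \emph{strict} inequality $F_X(z_n)<t$ is precisely what survives passage to the limit under the continuity of $F_1$, producing the required bound $F_1(z)-t$ rather than the weaker $F_1(z)-F_X(z)$ that the pointwise identity $F_X(F_X^{-1}(t))\geq t$ would give. Once this and the $F_X(z)=1$ subcase are handled, the remainder is routine book-keeping of one-sided limits.
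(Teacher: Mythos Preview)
Your proof is correct and follows essentially the same route as the paper's: both directions are argued via the $\epsilon$-perturbation (your sequence $z_n\uparrow F_X^{-1}(t)$ is the paper's $F_X^{-1}(t)-\epsilon$), exploiting the strict inequality $F_X(F_X^{-1}(t)-\epsilon)<t$ together with continuity of $F_1$ for $L\geq R$, and the choice $t=F_X(z)+\epsilon$ for $L\leq R$. One small remark: in the boundary case $F_X(z)=1$ you invoke the constraint $X\geq 0$ to get $F_X^{-1}(0^+)\geq 0$, but this is unnecessary---$F_1$ is a distribution function, so $F_1(F_X^{-1}(0^+))\geq 0$ holds regardless, exactly as the paper writes.
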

\begin{proof}
The first equality of \eqref{objective:hz11111} is just \eqref{objec:hz11111:11}. Now let us show the second equality.
	For any $t\in(0,1]$ and $\epsilon>0$, we have $F_X(F_X^{-1}(t)-\epsilon)<t$ by definition, so
	\[F_{1}(F_X^{-1}(t)-\epsilon)-F_X(F_X^{-1}(t)-\epsilon)> F_{1}(F_X^{-1}(t)-\epsilon)-t,\]
	which leads to
	\begin{align}\label{equation:10}
	\sup_{z\in\R} \big(F_{1}(z)-F_X(z)\big)> F_{1}(F_X^{-1}(t)-\epsilon)-t.
	\end{align}
	By Assumption \ref{assumption:backgroundrisk}, $F_{1}$ is continuous. Sending $\epsilon\to 0$ in \eqref{equation:10}, we further obtain
	\begin{align*}
	\sup_{z\in\R} \big(F_{1}(z)-F_X(z)\big)\geq F_{1}(F_X^{-1}(t))-t.
	\end{align*}
	Since $t\in(0,1]$ is arbitrarily chosen, it follows
	\[\sup_{z\in\R} \big(F_{1}(z)-F_X(z)\big)\geq \sup_{t\in(0,1]} \big(F_{1}(F_X^{-1}(t))-t\big).\]
	
	Reversely, for any $z\in\R$, if $F_X(z)=1$, then $$F_1(z)-F_X(z)=F_1(z)-1\leq 0\leq \big(F_1(F_X^{-1}(t))-t\big)\bigl \vert_{t=0^+}\leq \sup_{t\in(0,1]} \big(F_{1}(F_X^{-1}(t))-t\big);$$ otherwise if $F_X(z)<1$, let $t=F_X(z)+\epsilon$ and $\epsilon>0$ be sufficiently small so that $t<1$, then by definition $F_X^{-1}(t)\geq z$. So
	\[F_{1}(z)-F_X(z)\leq F_{1}(F_X^{-1}(t))-t+\epsilon,\]
and	\[\sup_{z\in\R} \big(F_{1}(z)-F_X(z)\big)\leq \sup_{t\in(0,1]} \big(F_{1}(F_X^{-1}(t))-t\big)+\epsilon.\]
	The claim is thus proved by sending $\epsilon\to 0$.
\end{proof}

Collecting the previous results, Problem \eqref{hz11111} reduces to
\begin{align} \label{ps1}
\sup_{F^{-1}_{X}}\;\sup_{t\in(0,1]}&\;\big(F_{1}(F_X^{-1}(t))-t\big)\\\nonumber
\mathrm{s.t.}&\;\int_{0}^{1}F_{X}^{-1}(t)F^{-1}_{\rho}(1-t)\dt\leq x_{0},\ F_X^{-1}\in \mathbb{G}.
\end{align}
The following theorem solves Problem \eqref{ps1} completely.
\begin{thm} Under Assumptions~\ref{assumption:backgroundrisk} and~\ref{assmp-rho}, there exists an optimal solution of Problem \eqref{ps1} which is of the form
	\[F_{X}^{-1}(t)=\kappa^*\idd{\{t>r^*\}},\]
	and the optimal payoff of Problem \eqref{hz11111} is of the form
	\[X^{*}=\kappa^*\idd{\{\rho\leq F_{\rho}^{-1}(1-r^*)\}},\]
	where $\idd{A}$ is the indicator function of the event $A$,
\begin{align}\label{r:star}
r^{*}=\argmax_{r\in[0,1)} \left\{F_{1}\left(\frac{x_{0}}{\int_{r}^{1}F^{-1}_{\rho}(1-s)\ds}\right)-r\right\},
\end{align}
and
\begin{align}\label{kappa:star}	
\kappa^*=\frac{x_{0}}{\int_{r^*}^{1}F^{-1}_{\rho}(1-s)\ds}.
\end{align}
\end{thm}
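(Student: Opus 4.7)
The plan is to swap the order of the two suprema in Problem \eqref{ps1}. For each fixed $r\in(0,1)$, the inner subproblem is to maximize $F_X^{-1}(r)$ subject to the budget and $F_X^{-1}\in\mathbb{G}$, because $F_1$ is increasing by Assumption~\ref{assumption:backgroundrisk}. The key inequality is that, since $F_X^{-1}$ is increasing and nonnegative,
\begin{equation*}
x_0 \geq \int_0^1 F_X^{-1}(s) F_\rho^{-1}(1-s)\ds \geq F_X^{-1}(r)\int_r^1 F_\rho^{-1}(1-s)\ds,
\end{equation*}
yielding the pointwise bound $F_X^{-1}(r)\leq c(r):=x_0/\int_r^1 F_\rho^{-1}(1-s)\ds$ and hence the upper bound $\sup_{r\in[0,1)}(F_1(c(r))-r)$ on the value of \eqref{ps1} for every feasible $F_X^{-1}$.

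Next I would verify that this upper bound is attained. Since $\rho>0$ almost surely, $\int_r^1 F_\rho^{-1}(1-s)\ds$ is strictly positive, continuous and strictly decreasing in $r\in[0,1)$, and $F_1$ is continuous by Assumption~\ref{assumption:backgroundrisk}; hence $r\mapsto F_1(c(r))-r$ is continuous on $[0,1)$ and tends to $0$ as $r\to 1^-$, so the supremum in \eqref{r:star} is attained at some $r^*\in[0,1)$. Setting $\kappa^*=c(r^*)$, the candidate quantile $F_X^{-1}(t)=\kappa^*\idd{\{t>r^*\}}$ is increasing and left continuous, hence belongs to $\mathbb{G}$; it exhausts the budget because
\begin{equation*}
\int_0^1 \kappa^*\idd{\{s>r^*\}}F_\rho^{-1}(1-s)\ds = \kappa^*\int_{r^*}^1 F_\rho^{-1}(1-s)\ds = x_0,
\end{equation*}
and achieves inner supremum $F_1(\kappa^*)-r^*$, matching the upper bound. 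The optimal payoff $X^*=\kappa^*\idd{\{\rho\leq F_\rho^{-1}(1-r^*)\}}$ is then recovered from the anti-comonotonicity identity $X=F_X^{-1}(1-F_\rho(\rho))$, valid because $1-F_\rho(\rho)\sim U(0,1)$ under Assumption~\ref{assmp-rho}.

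The chief obstacle I anticipate is bookkeeping around the left-continuous convention of $\mathbb{G}$: the candidate satisfies $F_X^{-1}(r^*)=0$ rather than $\kappa^*$, so the inner supremum over $t$ is only approached from the right at $r^*$ and need not be attained. This is invisible at the level of the optimal value but has to be tracked carefully when evaluating the candidate's objective, which strictly speaking equals $\max\{F_1(0),F_1(\kappa^*)-r^*\}$; the first term is dominated by the second because $r=0$ lies in the outer feasible range in \eqref{r:star}, giving $F_1(\kappa^*)-r^*\geq F_1(c(0))\geq F_1(0)$. Once this point is flagged, the inequality chain ``feasible value $\leq$ upper bound $=$ candidate value'' closes without any off-by-a-point slippage.
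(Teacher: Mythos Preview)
Your proof is correct and follows a somewhat different route from the paper's. The paper argues by \emph{domination}: for each feasible $F_X^{-1}$ it locates a point $t_0$ at which the inner supremum is realized (via right limits and the continuity of $F_1$), builds the step function $G(t)=F_X^{-1}(t_0^+)\idd{\{t>t_0\}}$, notes that $G\leq F_X^{-1}$ so $G$ is still feasible, and checks that $G$ is at least as good as $F_X^{-1}$. This reduces the search to step functions $\kappa\idd{\{t>r\}}$, after which maximizing $\kappa$ for fixed $r$ and then maximizing over $r$ proceeds exactly as in your last step. Your approach bypasses the domination step: swapping the two suprema and using the single inequality
\[
\int_0^1 F_X^{-1}(s)F_\rho^{-1}(1-s)\ds\ \geq\ F_X^{-1}(r)\int_r^1 F_\rho^{-1}(1-s)\ds
\]
gives the \emph{uniform} upper bound $\sup_{r\in[0,1)}(F_1(c(r))-r)$ for every feasible $F_X^{-1}$ at once. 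You then only need to check that the single candidate $\kappa^*\idd{\{t>r^*\}}$ saturates this bound, and you correctly handle the left-continuity wrinkle that the candidate's inner supremum is $\max\{F_1(0),F_1(\kappa^*)-r^*\}$, with the first term dominated via $F_1(\kappa^*)-r^*\geq F_1(c(0))\geq F_1(0)$. The payoff of your route is brevity and the avoidance of the existence argument for $t_0$; the paper's route, in exchange, makes the structural reason for the step-function shape more transparent.
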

\begin{proof}
	Using the fact $F_{1}(F_{X}^{-1}({t}^{+}))\geq F_{1}(F_{X}^{-1}(t))$ and Assumption~\ref{assumption:backgroundrisk}, we can get that, for any feasible solution $F_{X}^{-1}$ of Problem \eqref{ps1}, there exists $0\leq t_{0}\leq 1$ such that
	\[F_{1}(F_{X}^{-1}({t_{0}}^{+}))-t_{0}=\sup_{t\in(0,1]}\;\big(F_{1}(F_X^{-1}(t))-t\big).\]
Notably, we deliberately set $F_{X}^{-1}({1}^{+})=F_{X}^{-1}({1})$ in the above equation, which is different to Footnote 4.	 If $t_{0}=1$, then it follows that
		\begin{align*} 
		\sup_{t\in(0,1]}\;\big(F_{1}(F_X^{-1}(t))-t\big)=F_{1}(F_{X}^{-1}({t_{0}}^{+}))-t_{0}\leq 0\leq F_{1}(F_{X}^{-1}({0}^+))-0.
		\end{align*}
Therefore, from now on, we assume $0\leq t_{0}<1$. Set
	\[G(t)=F_X^{-1}({t_{0}}^{+})\idd{\{t>t_{0}\}}.\]
	As $X\geq 0$ almost surely, then we have $F_{X}^{-1}(t)\geq 0$ for any $t\in(0,1]$.
	Consequently, $G(\cdot)$ is an increasing and left continuous function with $G(t)\leq F_{X}^{-1}(t)$ for $t\in (0,1]$, and hence it satisfies the constraint of Problem \eqref{ps1}.
	Moreover,
	\begin{align*}\
		\sup_{t\in(0,1]}\;\big(F_{1}(G(t))-t\big)&\geq \sup_{t\in(t_{0},1]}\;\big(F_{1}(G(t))-t\big)
		=\sup_{t\in(t_{0},1]}\;\big(F_{1}(F_{X}^{-1}({t_{0}}^{+}))-t\big)\\
		&=F_{1}(F_{X}^{-1}({t_{0}}^{+}))-t_{0}=\sup_{t\in(0,1]}\;\big(F_{1}(F_X^{-1}(t))-t\big),
	\end{align*}
	which means $G$ is at least as good as $F_X^{-1}$ in Problem \eqref{ps1}. 	

As a result, we only need to focus on the candidates of the form
	\[G(t)=\kappa \idd{\{t>r\}},\]
	where the parameters $\kappa$ and $r$ satisfy the budget constraint of Problem \eqref{ps1}, namely,
	\[0\leq \kappa\leq \frac{x_{0}}{\int_{r}^{1}F^{-1}_{\rho}(1-s)\ds},\quad 0\leq r<1.\]
	Because, for a given $r$, the objective function $F_{1}(G(t))-t=F_{1}(\kappa \idd{\{t>r\}})-t$ is increasing in $\kappa$, one should choose $\kappa$ as large as possible. Hence we set
	\[ \kappa=\frac{x_{0}}{\int_{r}^{1}F^{-1}_{\rho}(1-s)\ds},\quad 0\leq r<1.\]
	Therefore, the remaining problem is to choose $r^*\in[0,1)$ that solves the following optimization problem
	\begin{align}
		&\quad\;\sup_{r\in[0,1)}\left\{\sup_{t\in(0,1]}\;\big(F_{1}(G(t))-t\big)\right\}\nonumber\\
		&=\sup_{r\in[0,1)}\left\{\max\left\{\sup_{t\in(0,r]}\;\big(F_{1}(0)-t\big), \sup_{t\in(r,1]}\;\left(F_{1}\left(\frac{x_{0}}{\int_{r}^{1}F^{-1}_{\rho}(1-s)\ds}\right)-t\right)\right\}\right\}\nonumber\\
		&=\sup_{r\in[0,1)}\left\{\max\left\{F_1(0), F_{1}\left(\frac{x_{0}}{\int_{r}^{1}F^{-1}_{\rho}(1-s)\ds}\right)-r\right\}\right\}\nonumber\\
		&=\sup_{r\in[0,1)} \left\{F_{1}\left(\frac{x_{0}}{\int_{r}^{1}F^{-1}_{\rho}(1-s)\ds}\right)-r\right\},
		\label{ineq00}
	\end{align}
	where the last equation is due to that
\[\left( F_{1}\left(\frac{x_{0}}{\int_{r}^{1}F^{-1}_{\rho}(1-s)\ds}\right)-r\right)\bigg|_{r=0}
=F_{1}\left(\frac{x_{0}}{\int_{0}^{1}F^{-1}_{\rho}(1-s)\ds}\right)\geq F_{1}(0). \]
Note that $F_{1}$ is continuous and
\[\limsup_{r\to 1} \left(F_{1}\left(\frac{x_{0}}{\int_{r}^{1}F^{-1}_{\rho}(1-s)\ds}\right)-r\right)\leq 0
\leq \left( F_{1}\left(\frac{x_{0}}{\int_{r}^{1}F^{-1}_{\rho}(1-s)\ds}\right)-r\right)\bigg|_{r=0}, \]
so there exists $r^{*}\in [0,1)$ solving the above optimization problem \eqref{ineq00}. Accordingly, the optimal $\kappa^*$ is given in \eqref{kappa:star}.
This completes the proof.
\end{proof}

\begin{remark}\label{remark:portfolio}
	For Problem \eqref{hz1} (without background risk), \citet{hezhou2011} show that the optimal payoff is $X^*=\xi\idd{\{\rho\leq c^*\}}$ and the optimal value is $F_{\rho}(c^*)$, where $\xi$ is the goal level of wealth and $c^*$ is a constant such that $\EE[\rho X^*]=x_0.$ They further point out that, in this case, the optimal payoff in the goal-reaching model can be viewed as a digital option. Specifically, the investor either receives a fixed payment $\xi$ upon a ``winning event'' (i.e., $\rho\leq c^*$) or loses all the investment on a ``losing event'' (i.e., $\rho>c^*$).
	
	In contrast, the optimal payoff of Problem \eqref{hz1111} is $$X^{*}=\kappa^*\idd{\{\rho\leq F_{\rho}^{-1}(1-r^*)\}},$$ where $r^*$ and $\kappa^*$ are given in \eqref{r:star} and \eqref{kappa:star}, respectively. Moreover, the corresponding optimal value is
	\begin{align*}
	F_{0}\left(\frac{x_{0}}{\int_{r^*}^{1}F^{-1}_{\rho}(1-s)\ds}-\xi\right)-r^*.
	\end{align*}
	Since it is the aggregated risk (i.e., $X-Y$) rather than the sole financial risk (i.e., $X$) that is of major concern, the presence of background risk greatly affects the investor's decision. Although the optimal payoff profile is still a digital option, the fixed payment $\kappa^*$ and winning event (i.e., $\rho\leq F_{\rho}^{-1}(1-r^*)$) depend on $F_0$ (i.e., the marginal distribution of background risk $Y$).
\end{remark}

\section{Optimal reinsurance design}\label{reinsurance:design}
In this section, we apply the same quantile formulation method to examine goal-reaching problems in optimal reinsurance design. In particular, we investigate the effect of background risk on the reinsurance demand by comparing optimal contracts with and without background risk.

\subsection{Model setting}
In a given time period, an insurer endowed with initial wealth $w_0$ faces two sources of risks: $X$ and $Y$, where $X$ is non-negative and insurable and $Y$ is a background risk and may be negative. The optimal reinsurance design is concerned with an optimal partition of $X$ into two parts: $I(X)$ and $X-I(X)$, where $I(X)$ represents the portion of the loss that is ceded to a reinsurer and $R(X):=X-I(X)$ is the loss retained by the insurer. Thus, $I(x)$ and $R(x)$ are often called the insurer's ceded and retained loss functions, respectively.

In practice, the reinsurance contract is often asked to satisfy the principle of indemnity, which ensures the indemnity to be non-negative and less than the amount of the loss \citep{Tan2014}. Mathematically, the ceded loss function should satisfy $0\leq I(x)\leq x$ for all $x$. However, such a principle is insufficient to preclude ex post moral hazard. In addition to the principle of indemnity, \citet{Huberman1983}, \citet{Xu2019} and \citet{Chi2020} further require the ceded loss function to meet the \emph{incentive-compatible condition}, which asks both the ceded and retained loss functions to be increasing. This condition is essentially equivalent to that $I(x)$ is absolutely continuous and $0 \leq I'(x)\leq 1$ almost everywhere\footnote{\citet{Xu2019} provide a detailed discussion on the incentive-compatible condition. It is worth noting that the value change of $I'(x)$ on a set with zero Lebesgue measure has no impact on $I(x)$. Therefore, we do not repeatedly emphasize the term ``almost everywhere'' when mentioning the marginal ceded loss function ${I}'(x)$ afterwards.}. In this paper, we follow \citet{Huberman1983} to investigate the optimal ceded loss function among the set
$$\mathcal{I}:=\big\{I(x): I(0)=0,\, 0 \leq I'(x)\leq 1 \big\}.$$

As the reinsurer covers part of the loss for the insurer, the reinsurer will be compensated by collecting the premium from the insurer. We refer to \citet{Tan2009} for a list of premium principles that have been proposed by actuaries.
In this paper, we assume that the reinsurance premium is calculated according to the distortion premium principle which is frequently used in recent studies.
\begin{defn}\label{def:distortion-premium}
	For any non-negative random variable $Z$, the distortion premium principle is defined as
	\begin{align}\label{premium:distortion}
	\pi^g(Z):=(1+\varrho)\EE^g[Z]=(1+\varrho) \int_0^\infty {g}(1-F_Z(z))\dz,
	\end{align}
	where $\varrho \geq 0$ is a given constant, and $g: [0,1]\rightarrow [0,1]$ is an increasing and left continuous function with $g(0)=0$ and $g(1)=1$.
\end{defn}
In this premium principle, $\varrho$ is typically referred to as the safety loading coefficient and $g$ as the distortion function. When $g(x)\equiv x$, the distortion premium principle recovers the expected value premium principle. Furthermore, when the distortion function is concave and $\varrho=0$, the distortion premium principle reduces to Wang's principle.

Under the reinsurance contract $I(x)$, the insurer's total risk exposure is the sum of the retained insurable loss, the background risk and the reinsurance premium. Denoting by $W_I(X,Y)$ the insurer's final wealth, we thus immediately have
$$
W_I(X,Y):=w_0-Y-X+I(X)-\pi^g(I(X)).
$$

Similar to \citet{Gajek2004} and \citet{Bernard2009}, we aim to minimize the insolvency risk by choosing the appropriate reinsurance contract. By adopting dependence uncertainty between the insurable risk and the background risk, we formulate a goal-reaching problem as
\begin{align}\label{original:problem}
\sup_{I\in\mathcal{I}}&\;\inf_{Y\sim F_{0}}\;\BP{W_I(X,Y)\geq \xi},
\end{align}
where $\xi$ is a deterministic constant. For example, we can view $\xi$ as the targeted wealth that is required by regulator to meet some risk management requirement.

Problem \eqref{original:problem} will be solved via a two-step scheme. More precisely, we first derive the optimal reinsurance contract by fixing a reinsurance premium, then investigate what is the optimal reinsurance premium.

\subsection{Optimal contracts without background risk}\label{section:solution:without}
In this subsection, we analyze the optimal reinsurance design in the absence of background risk, i.e., $Y=0$. Specifically, we consider the optimization problem
\begin{align}\label{original:problem:without}
	\sup_{I\in\mathcal{I}}&\;\BP{w_0-X+I(X)-\pi^g(I(X))\geq \xi}.
\end{align}
We denote by $M$ the essential supremum of $X$, which is assumed to be finite in this section. To avoid trivial cases, we make the following assumption.
\begin{assmp}\label{ass:avoidtri}
	$w_0-\min\{\pi^g(X), M\}<\xi<w_0$.
\end{assmp}
It is necessary to assume $\xi<w_0$, because, otherwise, the optimal objective value is trivially $0$ and thus any feasible contract is a solution to Problem \eqref{original:problem:without}. On the other hand, $\xi$ should be assumed to be strictly larger than $w_0-\min\{\pi^g(X), M\}$ to rule out the trivial solution $I^*(x)=x$ or $I^*(x)=0$.
\par

Given a fixed premium $\pi\in [0,\pi^g(X)]$, Problem \eqref{original:problem:without} reduces to
\begin{align}\label{prob:imdem:without}
	\sup_{I\in \mathcal{I}}&\;\BP{w_0-X+I(X)-\pi\geq \xi}\\ \nonumber
	\mathrm{s.t.}&\; \pi^g[I(X)]=\pi.
\end{align}
By denoting \[\pi_{0}:=\EE^g[X]-\tfrac{\pi}{1+\varrho}\quad\text{and}\quad \eta:=w_0-\pi-\xi,\] we rewrite Problem \eqref{prob:imdem:without} in terms of the retained loss function $R(\cdot)$ as
\begin{align}\label{prob:rention:without}
	\sup_{R\in\setr}&\;\BP{R(X)\leq \eta}\\ \nonumber
	\mathrm{s.t.}&\;\EE^g[R(X)]=\pi_{0},
\end{align}
where $$\mathcal{R}:=\left\{R(x): R(0)=0,\, 0 \leq R'(x)\leq 1 \right\}.$$ Here we use the comonotonic additive property of the distortion premium principle. Obviously, $\pi_0$ should be non-negative. Moreover, note that when $\eta<0$ (i.e., $w_0-\xi<\pi$), the optimal objective value of Problem \eqref{prob:rention:without} is automatically $0$. As a consequence, the feasible range of the reinsurance premium should be $\pi\in [0, w_0-\xi].$

Motivated by the objective function of Problem \eqref{prob:rention:without}, we define
\begin{equation}\label{stop-loss:q}
	R_t(x):=\min\{x,t\}=\left\{
	\begin{array}{ll}
		x, &\quad 0 \leq x\leq t;\\
		t,&\quad x> t,
	\end{array}
	\right.
\end{equation}
for $t\geq 0$, then $$\EE^g[R_t(X)]=\int_0^{t} g(S_X(y))\dy,$$ where $S_X(x)$ is the survival distribution function of $X$. The following analysis is divided into two cases based on the comparison between $\pi_0$ and $\EE^g[R_{\eta}(X)]$.

\begin{enumerate}[label=Case (\Alph*)]
\item \label{optimalcase:A} $0\leq \pi_0\leq \EE^g[R_{\eta}(X)]$. By continuity and the monotone convergence theorem, we can find a $t^*\leq \eta$ such that $\EE^g[R_{t^*}(X)]=\pi_0$. Because $\BP{R_{t^*}(X)\leq \eta}=\BP{\min\{t^*, X\}\leq \eta}=1$, we have the following proposition.
\begin{prop}\label{prop:small}
	If $\pi_0\leq \EE^g[R_{\eta}(X)]$, then the optimal value of Problem \eqref{prob:rention:without} is 1 and $R_{t^*}(x)$ is an optimal solution to it, where $t^*\leq \eta$ is determined by $\EE^g[R_{t^*}(X)]=\pi_0$. \end{prop}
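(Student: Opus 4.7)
The plan is to exhibit $R_{t^*}$ as a feasible candidate whose objective value attains the trivial upper bound of $1$, thereby settling both the optimality claim and the value of the problem simultaneously.

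First I would verify that the parameter $t^*$ promised in the statement actually exists. The function $t\mapsto \EE^g[R_t(X)]=\int_0^t g(S_X(y))\dy$ is continuous and increasing in $t\geq 0$, with value $0$ at $t=0$ and value $\EE^g[R_\eta(X)]$ at $t=\eta$. Since $0\leq \pi_0\leq \EE^g[R_\eta(X)]$ by hypothesis, the intermediate value theorem furnishes some $t^*\in[0,\eta]$ with $\EE^g[R_{t^*}(X)]=\pi_0$, which automatically satisfies $t^*\leq \eta$.

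Next I would check feasibility. The function $R_{t^*}(x)=\min\{x,t^*\}$ satisfies $R_{t^*}(0)=0$ and is piecewise linear with slope $1$ on $[0,t^*]$ and slope $0$ on $(t^*,\infty)$, so $R_{t^*}'(x)\in\{0,1\}$ almost everywhere, which gives $R_{t^*}\in\mathcal{R}$. The premium constraint is satisfied by construction of $t^*$.

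Finally, because $R_{t^*}(X)=\min\{X,t^*\}\leq t^*\leq \eta$ pointwise, we have $\BP{R_{t^*}(X)\leq \eta}=1$. Since any probability is at most $1$, this is the optimal value and $R_{t^*}$ is an optimal solution. There is essentially no obstacle here; the entire content of the proposition lies in recognizing that the premium budget is large enough to afford a retention function whose range is contained in $[0,\eta]$, and continuity of $t\mapsto\EE^g[R_t(X)]$ turns this into an equality constraint solvable by the intermediate value theorem.
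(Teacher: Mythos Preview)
Your proof is correct and follows essentially the same approach as the paper: use continuity of $t\mapsto\EE^g[R_t(X)]$ to locate $t^*\in[0,\eta]$ with $\EE^g[R_{t^*}(X)]=\pi_0$, then observe that $R_{t^*}(X)\leq t^*\leq\eta$ forces the objective value to equal the trivial upper bound $1$. The paper's argument is just a terser version of what you wrote, so there is nothing to add.
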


\item \label{optimalcase:B} $\pi_0>\EE^g[R_{\eta}(X)]$. For any $q\geq \eta$, we define
\begin{equation}\label{optimal:R}
	R_{\eta,q}(x):=\left\{
	\begin{array}{ll}
		x, &\quad 0 \leq x\leq \eta;\\
		\eta,&\quad \eta<x\leq q;\\
		\eta+x-q,&\quad q<x.
	\end{array}
	\right.
\end{equation}
This class of retained loss functions has the following interesting properties.
\begin{lemma}\label{Rproperty}
The map $q\mapsto \EE^g[R_{\eta,q}(X)]$ is decreasing and 1-Lipchitz continuous on $[\eta,M]$.
Moreover, if $R\in\mathcal{R}$ satisfies $R(q)\leq \eta$ for some $q\geq \eta$, then $R(x)\leq R_{\eta,q}(x)$ for any $x\in [0,M]$.
\end{lemma}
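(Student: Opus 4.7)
My plan is to attack the two claims separately, since the first is an analytic fact about the distortion expectation and the second is a pointwise comparison coming directly from the definition of $\mathcal{R}$.

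For the first assertion, I would start by obtaining a clean integral representation of $\EE^g[R_{\eta,q}(X)]$ via the layer formula
\[\EE^g[R_{\eta,q}(X)] = \int_0^\infty g\!\left(\PP(R_{\eta,q}(X)>y)\right)\dy.\]
The key step is to analyze the event $\{R_{\eta,q}(X)>y\}$ piecewise. Reading $R_{\eta,q}$ off of \eqref{optimal:R}, the event coincides with $\{X>y\}$ when $y\in[0,\eta)$ (because $R_{\eta,q}\geq \eta$ on $(\eta,\infty)$ and equals the identity on $[0,\eta]$), and coincides with $\{X>q+y-\eta\}$ when $y\geq\eta$ (because $R_{\eta,q}\leq\eta$ on $[0,q]$, while on $(q,\infty)$ it has slope 1 with value $\eta$ at $x=q$). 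Substituting $u=q+y-\eta$ in the tail integral I expect to obtain
\[\EE^g[R_{\eta,q}(X)] = \int_0^{\eta} g(S_X(y))\dy + \int_{q}^{\infty} g(S_X(u))\du.\]
Only the second summand depends on $q$, and its integrand is nonnegative, so monotonicity in $q$ is immediate. Since $g$ takes values in $[0,1]$, the same formula gives
\[\EE^g[R_{\eta,q_1}(X)]-\EE^g[R_{\eta,q_2}(X)] = \int_{q_1}^{q_2} g(S_X(u))\du \leq q_2-q_1\]
for $\eta\leq q_1\leq q_2\leq M$, which is the 1-Lipschitz bound.

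For the second assertion, I would split the argument at $x=\eta$ and $x=q$ and use the two bounds on $R'$ that define $\mathcal{R}$. On $[0,\eta]$, since $R(0)=0$ and $R'\leq 1$ a.e., I get $R(x)\leq x = R_{\eta,q}(x)$. On $(\eta,q]$, monotonicity of $R$ (from $R'\geq 0$) together with the hypothesis $R(q)\leq\eta$ yields $R(x)\leq R(q)\leq \eta = R_{\eta,q}(x)$. On $(q,M]$, writing $R(x)=R(q)+\int_q^x R'(s)\ds$ and using $R'\leq 1$ gives $R(x)\leq \eta + (x-q) = R_{\eta,q}(x)$.

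The only step that requires genuine care is the piecewise identification of the upper-level set $\{R_{\eta,q}(X)>y\}$; everything else is an immediate consequence of the definitions and the $g\in[0,1]$ constraint. No delicate continuity or compactness argument appears to be needed, so I do not anticipate any serious obstacle beyond bookkeeping on the boundary values $y=\eta$ and $x=q$, both of which are handled consistently since $R_{\eta,q}$ is continuous at these points.
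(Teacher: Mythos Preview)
Your proposal is correct. For the second assertion, your three-piece split and the paper's two-piece split (at $x=q$ only, handling $x\leq q$ via $R(x)\leq\min\{x,R(q)\}$) are essentially the same argument.

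For the first assertion the routes differ. You compute the layer integral explicitly and obtain the closed form
\[
\EE^g[R_{\eta,q}(X)]=\int_0^{\eta} g(S_X(y))\dy+\int_q^{\infty} g(S_X(u))\du,
\]
from which both monotonicity and the $1$-Lipschitz bound are read off (using $0\leq g\leq 1$). The paper instead writes, for $\eta\leq q<q'\leq M$,
\[
R_{\eta,q}(x)-R_{\eta,q'}(x)=\min\{(x-q)_+,\,q'-q\},
\]
notes that this difference is increasing in $x$ (hence comonotonic with $R_{\eta,q'}(X)$), and invokes comonotonic additivity of the distortion expectation to conclude $\EE^g[R_{\eta,q}(X)]-\EE^g[R_{\eta,q'}(X)]=\EE^g[\min\{(X-q)_+,q'-q\}]\in[0,q'-q]$. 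Your approach is self-contained and yields an explicit formula that is in fact reused later in the paper (cf.\ the expression $\int_0^{\eta}g(S_X(y))\dy$ appearing in $\psi(\pi)$); the paper's approach is shorter once comonotonic additivity is taken as known. Either way the argument is elementary.
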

\begin{proof}
For any $\eta\leq q<q'\leq M$, we have
\begin{equation*}
R_{\eta,q}(x)-R_{\eta,q'}(x)=\left\{
	\begin{array}{ll}
		0, &\quad 0 \leq x\leq q;\\
		x-q,&\quad q<x<q';\\
		q'-q,&\quad x\geq q',
	\end{array}
	\right.
	\end{equation*}
which together with the comonotonic additive property of the distortion premium principle, implies the first and second claims.
If $R\in\mathcal{R}$ satisfies $R(q)\leq \eta$ for some $q\geq \eta$, then $R(x)\leq\min\{x,R(q)\}\leq \min\{x, \eta\}=R_{\eta,q}(x)$ for any $x\leq q$. Because $R(q)\leq \eta=R_{\eta,q}(q)$ and $R'(x)\leq 1=R'_{\eta,q}(x)$ for any $x>q$, we see $R(x)\leq R_{\eta,q}(x)$ for any $x\in [q,M]$.
\end{proof}

With the help of this lemma, we can solve Problem \eqref{prob:rention:without} explicitly.
\begin{prop}\label{prop:larger}
If $\pi_0> \EE^g[R_{\eta}(X)]$, then the optimal value of Problem \eqref{prob:rention:without} is $F_X(q^*)$ and
$R_{\eta,q^*}(x)$ is an optimal solution, where
$$q^*:=\max\{q\in[\eta, M): \EE^g[R_{\eta,q}(X)]=\pi_{0}\}.
$$
\end{prop}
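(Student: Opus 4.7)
The plan is a standard ``achievability plus matching upper bound'' argument, both halves of which lean on \citelem{Rproperty}. First I would verify that $q^*$ is well-defined. By \citelem{Rproperty}, $q\mapsto \EE^g[R_{\eta,q}(X)]$ is continuous and decreasing on $[\eta,M]$. At $q=\eta$ it equals $\EE^g[X]$ (since $R_{\eta,\eta}(x)=x$), and as $q\uparrow M$ it tends to $\EE^g[R_\eta(X)]$. Because $\pi_0=\EE^g[X]-\pi/(1+\varrho)\in[\EE^g[R_\eta(X)],\EE^g[X]]$ with strict inequality on the left by hypothesis, the intermediate value theorem gives a nonempty level set $\{q\in[\eta,M):\EE^g[R_{\eta,q}(X)]=\pi_0\}$, and taking its maximum defines $q^*\in[\eta,M)$. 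One then checks $R_{\eta,q^*}\in\mathcal{R}$ (since $R_{\eta,q^*}(0)=0$ and $R'_{\eta,q^*}$ takes values only in $\{0,1\}$), and by construction $\EE^g[R_{\eta,q^*}(X)]=\pi_0$. Direct inspection of the three pieces of $R_{\eta,q^*}$ in \eqref{optimal:R} shows $\{R_{\eta,q^*}(X)\leq \eta\}=\{X\leq q^*\}$, so the objective value attained is $F_X(q^*)$.

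For the matching upper bound, fix any feasible $R\in\mathcal{R}$ with $\EE^g[R(X)]=\pi_0$. Since $R$ is $1$-Lipschitz with $R(0)=0$ we have $R(x)\leq x$, in particular $R(\eta)\leq\eta$, so
\[q:=\sup\{x\in[0,M]:R(x)\leq \eta\}\in[\eta,M].\]
If $q=M$, then by continuity $R(x)\leq \eta$ for all $x\in[0,M]$, which combined with $R(x)\leq x$ gives $R(x)\leq R_\eta(x)$ pointwise, hence $\pi_0=\EE^g[R(X)]\leq \EE^g[R_\eta(X)]<\pi_0$, a contradiction. So $q\in[\eta,M)$, and by continuity of $R$ we have $R(q)\leq \eta$. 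Applying \citelem{Rproperty} yields $R(x)\leq R_{\eta,q}(x)$ on $[0,M]$, whence $\pi_0=\EE^g[R(X)]\leq \EE^g[R_{\eta,q}(X)]$. Since the map $q\mapsto\EE^g[R_{\eta,q}(X)]$ is decreasing and $q^*$ is the \emph{largest} $q$ at which it equals $\pi_0$, this inequality forces $q\leq q^*$. Finally, by the definition of $q$ and continuity of $R$, $\{R(X)\leq \eta\}\subseteq\{X\leq q\}$, so $\BP{R(X)\leq\eta}\leq F_X(q)\leq F_X(q^*)$, matching the value produced in the previous step.

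The main obstacle will be the passage from $\pi_0\leq \EE^g[R_{\eta,q}(X)]$ to $q\leq q^*$: although $q\mapsto \EE^g[R_{\eta,q}(X)]$ is decreasing, it need not be strictly decreasing (for instance, when $X$ has flat spots in its distribution), so the level set at $\pi_0$ may be a nondegenerate subinterval of $[\eta,M)$. Defining $q^*$ as the \emph{maximum} of this level set is precisely what lets the weak inequality propagate to $q\leq q^*$; any other convention would fail to yield the sharp bound $F_X(q^*)$. A secondary subtlety is the repeated reliance on continuity of $R$, which comes from $R\in\mathcal{R}$ being $1$-Lipschitz, and is needed to conclude both $R(q)\leq\eta$ from the supremum definition and the inclusion $\{R(X)\leq\eta\}\subseteq\{X\leq q\}$.
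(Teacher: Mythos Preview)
Your proof is correct and follows essentially the same approach as the paper's. The paper argues by contradiction (assuming a strictly better $\bar{R}$, defining $\bar{q}=\max\{q:\bar{R}(q)\leq\eta\}>q^*$, and deriving $\pi_0<\pi_0$ via \citelem{Rproperty}), whereas you give the equivalent direct argument; the auxiliary point, the use of \citelem{Rproperty}, and the role of $q^*$ being the \emph{maximum} of the level set are identical in both.
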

\begin{proof}
If $\pi_0> \EE^g[R_{\eta}(X)]=\EE^g[R_{\eta,M}(X)]$, then Lemma \ref{Rproperty} together with the continuity and the monotone convergence theorem implies that there exists at least a $q\in[\eta, M)$ such that $\EE^g[R_{\eta,q}(X)]=\pi_{0}$. Therefore, $q^*$ is well-defined and $q^{*}<M$.

We now show that $R_{\eta,q^*}(x)$ is an optimal solution to Problem \eqref{prob:rention:without} with the optimal value $F_X(q^*)$ by contradiction. Otherwise, there must be a retained loss function $\bar{R}(\cdot)\in\mathcal{R}$ satisfying \[\BP{\bar{R}(X)\leq \eta}>\BP{R_{\eta,q^*}(X)\leq \eta}=F_X(q^*)\] and $\EE^g[\bar{R}(X)]=\pi_{0}$.
Let \[\bar{q}=\max\{0\leq q\leq M: \bar{R}( {q})\leq \eta \}.\]
Then, it follows that $\bar{q}>q^*$ and $\bar{R}(\bar{q})=\eta$,
which together with Lemma \ref{Rproperty} implies
$$\EE^g[\bar{R}(X)]\leq \EE^g[R_{\eta,\bar{q}}(X)]<\EE^g[R_{\eta,q^*}(X)]=\pi_{0}, $$
where the second inequality follows from the definition of $q^*$. A contradiction is reached, and the proof is thus completed.
\end{proof}
\end{enumerate}

To obtain the optimal reinsurance premium $\pi^*$ in Problem \eqref{original:problem:without}, we define
\begin{align*}
	\psi(\pi) &:=\pi_0-\EE^g[R_{\eta}(X)]=\EE^g[X]-\frac{\pi}{1+\varrho}-\EE^g[R_{\eta}(X)] \\
	&=\EE^g[X]-\frac{\pi}{1+\varrho}-\int_0^{w_0-\pi-\xi} g(S_X(y))\dy
\end{align*}
for $\pi\in [0,w_0-\xi]$. It is obvious that $\psi(0)\geq 0$ and $\psi(w_0-\xi)>0$ according to Assumption \ref{ass:avoidtri}. Taking the derivative of $\psi(\pi)$ with respect to $\pi$ yields
\begin{align}\label{phi:derivative}
\psi'(\pi)=g\big(S_X(w_0-\pi-\xi)\big)-\frac{1}{1+\varrho}.
\end{align}
As $\psi(\pi)$ is a convex function, we can establish the following result.
\begin{thm}\label{solution:without}
Under Assumption \ref{ass:avoidtri}, we have
	 \begin{enumerate}[label=(\roman*)]
\item If there exists $\hat{\pi}\in [0,w_0-\xi)$ such that $\psi(\hat{\pi})\leq 0$, then an optimal solution to Problem \eqref{original:problem:without} is given as
$$I_{\hat{t}}(x)=\min\{x-\hat{t}, 0\},$$ where $\hat{t}\in [0,w_0-\hat{\pi}-\xi]$ is such that $\EE^g[I_{\hat{t}}(X)]=\frac{\hat{\pi}}{1+\varrho}$. Moreover, the corresponding reinsurance premium is $\hat{\pi}$, and the objective value is 1.

\item Otherwise, an optimal solution to Problem \eqref{original:problem:without} is given as
\begin{equation*}
I_{\eta^*,q^*}(x)=\left\{
\begin{array}{ll}
0, &\quad x\leq \eta^*;\\
x-\eta^*,&\quad \eta^*<x\leq q^*;\\
q^*-\eta^*,&\quad q^*<x,
\end{array}
\right.
\end{equation*}
where $\eta^*=w_0-\pi^*-\xi$, $\pi^*$ is given in the proof, and $q^*=\max\{q\in [\eta^*,M): \EE^g[I_{\eta^*,q}(X)]=\frac{\pi^*}{1+\varrho}\}$. Moreover, the corresponding reinsurance premium is $\pi^*$ and the objective value is $F_X(q^*)$.
\end{enumerate}

\end{thm}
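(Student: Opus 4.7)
The plan is to execute the two-step scheme described just before the theorem. For every admissible premium $\pi\in[0,w_0-\xi]$, rewriting the problem in terms of the retained loss $R=X-I$ turns \eqref{original:problem:without} into Problem \eqref{prob:rention:without} with $\pi_0=\EE^g[X]-\pi/(1+\varrho)$ and $\eta=w_0-\pi-\xi$; the inner optimal value equals $1$ when $\psi(\pi)\leq 0$ (by \citeprop{prop:small}) and $F_X(q^*(\pi))$ when $\psi(\pi)>0$ (by \citeprop{prop:larger}), where $q^*(\pi)$ is the largest $q\geq\eta$ satisfying $\int_\eta^q g(S_X(y))\dy=\pi/(1+\varrho)$. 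It then remains to maximize this inner value over $\pi\in[0,w_0-\xi]$.

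For part (i), the hypothesis directly furnishes $\hat\pi$ with $\psi(\hat\pi)\leq 0$, so \citeprop{prop:small} applies at $\pi=\hat\pi$ and the retained loss $R_{\hat t}(x)=\min\{x,\hat t\}$ with $\hat t\in[0,w_0-\hat\pi-\xi]$ chosen so that $\EE^g[R_{\hat t}(X)]=\pi_0$ is optimal. Converting back to the ceded side yields the stop-loss $I_{\hat t}(x)=(x-\hat t)^+$, the outer objective attains $1$, and since probabilities cannot exceed $1$ this solution is trivially globally optimal.

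For part (ii), $\psi>0$ on $[0,w_0-\xi)$ by hypothesis, placing every feasible $\pi$ in Case~(B) with inner value $F_X(q^*(\pi))$. Since $F_X$ is increasing the outer maximization reduces to maximizing $q^*(\pi)$. Heuristically, implicit differentiation of the defining relation for $q^*$ gives $\frac{dq^*}{d\pi}=-\frac{\psi'(\pi)}{g(S_X(q^*(\pi)))}$, so $q^*$ is maximized where $\psi$ attains its minimum. Using convexity of $\psi$ together with $\psi'(w_0-\xi)=\varrho/(1+\varrho)>0$, I would take $\pi^*$ to be this minimizer on $[0,w_0-\xi]$---either $\pi^*=0$ when $\psi'(0)\geq 0$, or the interior point characterized by $g(S_X(w_0-\pi^*-\xi))=1/(1+\varrho)$ when $\psi'(0)<0$. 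Setting $\eta^*=w_0-\pi^*-\xi$ and defining $q^*$ as in the statement, the ceded contract $I_{\eta^*,q^*}(x)=x-R_{\eta^*,q^*}(x)$ is the desired layer, with outer value $F_X(q^*)$.

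The hardest step will be handling possible non-smoothness of $q^*(\pi)$ when $g\circ S_X$ has flat pieces or jumps, in which case $\psi$ may fail to be differentiable and the $\max$ in the definition of $q^*$ is genuinely needed. To make the argument rigorous I would replace the calculus above with a direct monotone comparison: for $\pi_1<\pi_2$ in Case~(B), \citelem{Rproperty} combined with the comonotonic additivity of $\pi^g$ shows that $q^*(\pi_2)\geq q^*(\pi_1)$ if and only if $\psi(\pi_2)\leq\psi(\pi_1)$. The outer optimization then reduces to minimizing the continuous convex function $\psi$ on the compact interval $[0,w_0-\xi]$, which always admits a minimizer $\pi^*$, and \citeprop{prop:larger} certifies $I_{\eta^*,q^*}$ as globally optimal.
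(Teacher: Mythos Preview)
Your proposal is correct and follows essentially the same two-step scheme as the paper: apply \citeprop{prop:small} or \citeprop{prop:larger} at each fixed premium, then optimize over $\pi$ by reducing the outer problem to minimizing the convex function $\psi$ on $[0,w_0-\xi]$. The only difference is in how you justify that maximizing $q^*(\pi)$ amounts to minimizing $\psi(\pi)$: the paper does this in one line via the explicit identity
\[
\psi(\pi)=\int_{q}^{M} g(S_X(y))\dy,
\]
obtained by subtracting the premium constraint from $\EE^g[X]$; since the right side is decreasing in $q$, the equivalence is immediate and requires no implicit differentiation, no smoothness of $g\circ S_X$, and no separate monotone-comparison argument invoking \citelem{Rproperty}.
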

\begin{proof}
If there exists $\hat{\pi}\in [0,w_0-\xi)$ such that $\psi(\hat{\pi})\leq 0$, then $\hat{\pi}_0:=\EE^g[X]-\frac{\hat{\pi}}{1+\varrho}\leq \EE^g[R_{\hat{\eta}}(X)]$, where $\hat{\eta}:=w_0-\hat{\pi}-\xi$. According to Proposition~\ref{prop:small}, there exists a $\hat{t} \leq \hat{\eta}$ such that $R_{\hat{t}}(x)$ is a solution to Problem \eqref{prob:rention:without} with parameter $\hat{\pi}_0$ and the corresponding optimal value is $1$. Obviously, $I_{\hat{t}}(x)$ is also an optimal solution to Problem \eqref{original:problem:without}.

Otherwise, we must have $\psi(\pi)>0$ for all $\pi\in [0,w_0-\xi]$, and thus \ref{optimalcase:A} will not take place. Proposition \ref{prop:larger} implies that the optimal retained loss function to Problem \eqref{original:problem:without} is in the form of $R_{\eta,q}(x)$
for some $q\geq\eta$. Considering that the corresponding optimal objective value is $F_X(q),$ we should maximize $q$ over $\pi\in [0,w_0-\xi]$ under the constraint of $\EE^g[R_{\eta,q}(X)]=\EE^g[X]-\frac{\pi}{1+\varrho}$, or equivalently, $\EE^g[I_{\eta,q}(X)]=\frac{\pi}{1+\varrho}$. Recalling that $\eta=w_0-\pi-\xi$, we have
\begin{align*}
\frac{\pi}{1+\varrho}=\int_{w_0-\pi-\xi}^{q} g(S_X(y))\dy=\EE^g[X]-\int_0^{w_0-\pi-\xi} g(S_X(y))\dy-\int_q^{M} g(S_X(y))\dy,
\end{align*}
which is equivalent to
\begin{align*}
\psi(\pi)=\int_q^{M} g(S_X(y))\dy.
\end{align*}
In order to maximize the corresponding $q$, we need to find a $\pi^*\in [0,w_0-\xi]$ that solves
\begin{align}\label{problem:Phi}
\min_{\pi\in [0,w_0-\xi]}&\; \psi(\pi).
\end{align}
Define 
\begin{equation}\label{bar-pi}
\bar{\pi}:=\inf\left\{\pi\in \mathbb{R}: \psi'(\pi)=g\big(S_X(w_0-\pi-\xi)\big)-\frac{1}{1+\varrho}\geq 0\right\}.
\end{equation}
Because $\psi(\pi)$ is convex, an optimal $\pi^*$ to Problem \eqref{problem:Phi} is given by $$\pi^*=\max \big\{0,\min\{\bar{\pi},w_0-\xi\}\big\}.$$ Note that $\psi(\pi)$ is strictly positive over $[0, w_0-\xi)$, which in turn implies $q^*\in [w_0-\pi^*-\xi,M)$. The proof is thus completed.
\end{proof}
\begin{remark}
Theorem \ref{solution:without} indicates that the optimal reinsurance in the absence of background risk can be either a stop-loss contract or a stop-loss contract with an upper limit. In particular, if $\bar{\pi}$ defined in \eqref{bar-pi} is negative, then the optimal contract is no reinsurance. To be specific, for the part (i) of Theorem \ref{solution:without}, the condition $\psi(x)\leq 0$ for some $x\in [0,w_0-\xi)$ implies that there exists $\hat{\pi}_0$ such that \ref{optimalcase:A} will occur. In this situation, the stop-loss contract $I_{\hat{t}}(x)$ is an optimal contract with reinsurance premium $\pi^g[I_{\hat{t}}(X)]=\hat{\pi}$ and the corresponding objective value of Problem \eqref{original:problem:without} is $1$. It should be emphasized that the solution to Problem \eqref{original:problem:without} for this case may not be unique, and here we only provide a best strategy for the insurer. For the part (ii) of Theorem \ref{solution:without}, as $\psi(x)>0$ for all $x\in [0,w_0-\xi]$, the optimal objective value of Problem \eqref{original:problem:without} cannot be $1$ any more. In this case, the intuition for the optimality of the stop-loss contract with an upper limit (i.e., $I_{\eta^*,q^*}(x)$) is given as follows. For one thing, due to the goal-reaching objective, when the loss is larger than $q^*$, it is optimal for the insurer to retain the tail risk because such a portion of risk will not affect the objective value, and it can additionally reduce the insurer's reinsurance premium such that the objective value is improved. For another thing, given $R(q^*)=\eta^*$, the insurer can further reduce the reinsurance premium by ceding the loss below $q^*$ as little as possible, and hence only the risk over the layer $[\eta^*, q^*]$ would be ceded. 

It is worth noting that the optimal reinsurance design problem has been studied extensively in the past few decades. This problem is first formally analyzed by \citet{Borch1960} and \citet{Arrow1963}, who both demonstrate that, the stop-loss contract is the optimal strategy if the reinsurance premium is calculated by the expected value principle. The objective of \citet{Borch1960} is to minimize the variance of an insurer's total risk exposure, while \citet{Arrow1963} considers to maximize the expected utility of a risk-averse insurer's terminal wealth. \citet{Young1999} and \citet{Chi2020} extend Arrow's model by assuming Wang's premium principle and show that the optimal solution is partial reinsurance over a number of closed intervals. \citet{ChiTan2013} instead study the minimization of value at risk (VaR) and conditional value at risk (CVaR) of the insurer's total risk exposure with a general premium principle which includes Wang's premium principle as a special case. Using a construction approach, they find that the stop-loss reinsurance with an upper limit is often optimal. \citet{Cui2013} extend \citet{ChiTan2013} by assuming a distortion risk measure and a distortion premium principle, and conclude that the optimal reinsurance strategy is a combination of stop-loss contract with an upper limit and quota share reinsurance. In contrast, we explore the optimal reinsurance design with a goal-reaching objective by using the quantile formulation approach, and show that the optimal contract can be either a stop-loss contract or a stop-loss contract with an upper limit under the distortion premium principle. 
\end{remark}

\subsection{Optimal contracts with background risk}\label{section:reinsurance:background}

In this subsection, we analyze the effect of background risk in the optimal reinsurance design. More specifically, background risk $Y$ is assumed to be random instead of a constant. Similar to the previous subsection, we use a two-step approach to solve Problem \eqref{original:problem}. In particular, given a reinsurance premium $\pi$, we examine the following optimization problem
\begin{align}\label{optimization:rent}
\sup_{R\in\setr}\;\inf_{Y\sim F_{0}}&\;\BP{w_0-Y-R(X)-\pi\geq \xi}\\ \nonumber
\mathrm{s.t.}&\;\EE^g[R(X)]=\pi_{0}.
\end{align}
Recall $\pi_{0}=\EE^g[X]-\tfrac{\pi}{1+\varrho}$ and $\eta=w_0-\pi-\xi$. Define $$\Xi:=w_0-\pi-\xi-Y \quad\text{and}\quad F_{\pi}(z):=1-F_{0}(w_0-\pi-\xi-z),$$ then Problem \eqref{optimization:rent} is further equivalent to
\begin{align}\label{prob:ren:2}
\sup_{R\in\setr}\;\inf_{\Xi\sim F_{\pi}}&\;\BP{R(X)\leq \Xi}\\ \nonumber
\mathrm{s.t.}&\;\EE^g[R(X)]=\pi_{0}.
\end{align}

Under Assumption~\ref{assumption:backgroundrisk}, Corollaries \ref{coro1} and \ref{coro:prob:equality} imply\footnote{It is worth mentioning that the worst-case dependence structure between the insurable risk $X$ and the background risk $Y$ that solves the optimization problem on the left-hand side of Eq. \eqref{robust:objective0} is not (completely) comonotonic. In fact, the results in Appendix \ref{AppendixB} indicate that the worst-case dependence structure between $X$ and $Y$ is associated with a so-called {\it Shuffle-of-Min}. Such a family of dependence structures shows that $X$ and $Y$ are strongly piecewise monotone functions of each other or piecewise comonotonic \citep[see, e.g.,][]{Embrechts2005}. }
\begin{align}\label{robust:objective0}
\inf_{\Xi\sim F_{\pi}}\BP{R(X)\leq \Xi} &=\sup_{z\in\R} \big(F_{R(X)} (z)-F_{\pi}(z)\big).
\end{align}
Because $F_{R(X)} (z)=0$ for $z< 0$ and $F_{R(X)} (z)=1$ for $z\geq M$, we have
\begin{align*}
\sup_{z< 0} \big(F_{R(X)} (z)-F_{\pi}(z)\big)=\sup_{z< 0} \big(-F_{\pi}(z)\big)\leq 0\leq 1-F_{\pi}(M)=\big(F_{R(X)} (z)-F_{\pi}(z)\big)\big|_{z=M}
\end{align*}
and
\begin{align*}
\sup_{z\geq M} \big(F_{R(X)} (z)-F_{\pi}(z)\big)=\sup_{z\geq M} \big(1-F_{\pi}(z)\big)=1-F_{\pi}(M)=\big(F_{R(X)} (z)-F_{\pi}(z)\big)\big|_{z=M}.
\end{align*}
Therefore, by \eqref{robust:objective0},
\begin{align}\label{robust:objective}
\inf_{\Xi\sim F_{\pi}}\BP{R(X)\leq \Xi} &=\sup_{z\in [0,M]} \big(F_{R(X)} (z)-F_{\pi}(z)\big).
\end{align}
Consequently, Problem \eqref{prob:ren:2} becomes
\begin{align}\label{prob:cdf}
\sup_{R\in \mathcal{R}} \; \sup_{z\in [0,M]} &\;\big(F_{R(X)}(z)-F_{\pi}(z)\big)\\ \nonumber
\mathrm{s.t.}&\;\EE^g[R(X)]=\pi_{0}.
\end{align}
As two supremums in the objective of Problem \eqref{prob:cdf} are exchangeable, we can rewrite this problem as
\begin{align}\label{prob:cdf:2}
\sup_{z\in [0,M]} \; \sup_{R\in \mathcal{R}} &\;\big(F_{R(X)}(z)-F_{\pi}(z)\big)\\ \nonumber
\mathrm{s.t.}&\;\EE^g[R(X)]=\pi_{0}.
\end{align}

In what follows, Problem \eqref{prob:cdf:2} will be further analyzed via a two-step scheme. We first fix a $z\in [0,M]$ and find the optimal retained loss function $R(\cdot)$. Then, we determine the optimal $z^*$. Accordingly, for each fixed $z\in [0,M]$, we consider 
\begin{align}\label{prob:cdf:sub}
\sup_{R\in \mathcal{R}} &\;\big(F_{R(X)}(z)-F_{\pi}(z)\big)\\ \nonumber
\mathrm{s.t.}&\;\EE^g[R(X)]=\pi_{0}.%
\end{align}
As $z$ is fixed, $F_{\pi}(z)$ is a constant. We can remove this term from the objective of Problem \eqref{prob:cdf:sub} and investigate the following equivalent problem
\begin{align*}%
\sup_{R\in \mathcal{R}} &\; F_{R(X)}(z)\\ \nonumber
\mathrm{s.t.}&\; \EE^g[R(X)]=\pi_{0}. %
\end{align*}
Note that the above optimization problem is equivalent to Problem \eqref{prob:rention:without}. Therefore, from Propositions \ref{prop:small} and \ref{prop:larger}, we obtain the following result.
\begin{prop}\label{prop:35}
	Under Assumption \ref{assumption:backgroundrisk}, we have
\begin{enumerate}[label=(\roman*)]
\item
If $\pi_0\leq \EE^g[R_z(X)]$, then an optimal solution to Problem \eqref{prob:cdf:sub} is
\begin{equation*} 
R_{t^*}(x)=\left\{
\begin{array}{ll}
x, &\quad x\leq t^*;\\
t^*,&\quad x\geq t^*,
\end{array}
\right.
\end{equation*}
where $t^*\leq z$ is such that $\EE^g[R_{t^*}(X)]=\pi_0$, and the corresponding optimal value is $1-F_{\pi}(z)$. \\
\item If $\pi_0> \EE^g[R_z(X)]$, then the optimal solution to Problem \eqref{prob:cdf:sub} is given as
\begin{equation*} 
R_{z,q^*}(x)=\left\{
\begin{array}{ll}
x, &\quad x\leq z;\\
z,&\quad z<x\leq q^*;\\
z+x-q^*,&\quad q^*<x,
\end{array}
\right.
\end{equation*}
where $q^*=\sup\{q\in[z,M]: \EE^g[R_{z,q}(X)]=\pi_0\}$ and the corresponding optimal value is $F_X(q^*)-F_{\pi}(z).$
\end{enumerate}
\end{prop}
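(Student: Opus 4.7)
The plan is to reduce Problem \eqref{prob:cdf:sub} to the already-solved Problem \eqref{prob:rention:without} and then quote Propositions~\ref{prop:small} and~\ref{prop:larger}. The key observation is that for each fixed $z\in[0,M]$, the quantity $F_{\pi}(z)$ is a \emph{constant} in the decision variable $R$. Therefore Problem \eqref{prob:cdf:sub} is equivalent to
\[
\sup_{R\in\mathcal{R}}\;\mathbb{P}\bigl(R(X)\leq z\bigr)\qquad\text{s.t.}\qquad\EE^g[R(X)]=\pi_0,
\]
which is exactly Problem \eqref{prob:rention:without} with the threshold parameter $\eta$ replaced by $z$. Neither Proposition~\ref{prop:small} nor Proposition~\ref{prop:larger} used any feature of $\eta$ beyond $\eta\geq 0$, so both results apply verbatim here.

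Applying Proposition~\ref{prop:small} in the regime $\pi_0\leq\EE^g[R_z(X)]$ immediately yields the first claim: the reduced problem attains optimal value $1$ via $R_{t^*}$ with $t^*\leq z$ determined by $\EE^g[R_{t^*}(X)]=\pi_0$, so Problem~\eqref{prob:cdf:sub} attains value $1-F_{\pi}(z)$ after adding back the constant. Likewise, applying Proposition~\ref{prop:larger} in the regime $\pi_0>\EE^g[R_z(X)]$ gives $R_{z,q^*}$ as an optimal solution with reduced value $F_X(q^*)$, hence value $F_X(q^*)-F_{\pi}(z)$ for Problem~\eqref{prob:cdf:sub}.

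The only cosmetic gap is the minor discrepancy between the characterization of $q^*$ in Proposition~\ref{prop:larger} (namely $q^{*}=\max\{q\in[z,M):\EE^g[R_{z,q}(X)]=\pi_0\}$) and the formulation of Proposition~\ref{prop:35} (which uses $\sup$ over $[z,M]$). I would dispatch this by recalling that Lemma~\ref{Rproperty} gives continuity of $q\mapsto\EE^g[R_{z,q}(X)]$ and that $\EE^g[R_{z,M}(X)]=\EE^g[R_z(X)]<\pi_0$ in Case~(ii), so the supremum is attained and lies strictly below $M$; hence $\sup$ and $\max$ coincide and the two expressions for $q^*$ agree. There is no substantive obstacle: the whole argument is an equivalence reduction followed by citation of previously established results.
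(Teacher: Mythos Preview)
Your proposal is correct and is exactly the approach the paper takes: it removes the constant $F_{\pi}(z)$, observes that the resulting problem is Problem~\eqref{prob:rention:without} with $\eta$ replaced by $z$, and then directly invokes Propositions~\ref{prop:small} and~\ref{prop:larger}. Your additional remark reconciling the $\max$ over $[z,M)$ with the $\sup$ over $[z,M]$ is a valid clarification that the paper leaves implicit.
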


In order to determine the optimal reinsurance form, it is necessary to compare $\pi_0$ and $\EE^g[R_z(X)]$. Because $\EE^g[R_z(X)]$ is increasing and continuous in $z$, we can find a $z^*\in[0,M]$ such that $\EE^g[R_{z^*}(X)]=\pi_0$ and the condition $\pi_0\leq \EE^g[R_z(X)]$ (or $\pi_0> \EE^g[R_z(X)]$) is equivalent to $z^*\leq z$ (or $z^*>z$). Therefore, if $z\geq z^*$, then the optimal value of Problem \eqref{prob:cdf:sub} is $1-F_{\pi}(z)$, which is decreasing in $z$. On the other hand, if $z< z^*$, then the optimal value of Problem \eqref{prob:cdf:sub} is
\[\sup_{y\in [z,M]}\big\{F_X(y)-F_{\pi}(z)\colon \EE^g[R_{z, y}(X)]=\pi_0\big\}. \]

According to Proposition \ref{prop:35} and the above discussion, Problem \eqref{prob:cdf:2} (or equivalently, Problem \eqref{optimization:rent}) is equivalent to
\begin{align}\label{prob:cdf:3}
\sup_{0\leq z\leq y\leq M \atop \EE^g[R_z(X)]\leq \pi_0 }&\; \big(F_{X}(y)-F_{\pi}(z)\big)\\ \nonumber
\mathrm{s.t.}\quad \ &\; \EE^g[R_{z,y}(X)]=\pi_{0}. 
\end{align}
Therefore, the original problem \eqref{original:problem} can be rewritten as
\begin{align}\label{original:problem:equi}
\sup_{\pi_0\in [0,\EE^g[X]]} \; \sup_{0\leq z\leq y\leq M \atop \EE^g[R_z(X)]\leq \pi_0}&\; \big(F_{X}(y)-F_{\pi}(z)\big)\\ \nonumber
\mathrm{s.t.}\quad \ &\; \EE^g[R_{z,y}(X)]=\pi_{0}. 
\end{align}
Accordingly, we can exchange two supremums in the objective of Problem \eqref{original:problem:equi} and consider the following equivalent problem
\begin{align}\label{original:problem:equi:equi}
\sup_{z\in [0,M]} \; \sup_{z\leq y\leq M \atop \EE^g[R_z(X)]\leq \pi_0\leq \EE^g[X]}&\; \big(F_{X}(y)-F_{\pi}(z)\big)\\ \nonumber
\mathrm{s.t.}\quad \quad \ \ &\; \EE^g[R_{z,y}(X)]=\pi_{0}. 
\end{align}

Recalling that $F_{\pi}(z)=1-F_{0}(w_0-\pi-\xi-z)$ and $\pi_{0}=\EE^g[X]-\tfrac{\pi}{1+\varrho}$, we can express the objective function of Problem \eqref{original:problem:equi:equi} as
\begin{align*}
F_{X}(y)-F_{\pi}(z)=F_{X}(y)+F_{0}\left(w_0-(1+\varrho)\int_z^{y} g(S_X(t))\dt-\xi-z\right)-1.
\end{align*}
Moreover, it is easy to see that the condition $\EE^g[R_z(X)]\leq \pi_0\leq \EE^g[X]$, together with $\EE^g[R_{z,y}(X)]=\pi_{0}$, is equivalent to $0\leq z \leq y\leq M $.
Therefore, we can transform Problem \eqref{original:problem:equi:equi} into an optimization problem over a triangle area in $\R^{2}$ as follows 
\begin{align}\label{original:problem:equi:equi:equiv}
\sup_{\substack{0\leq z\leq y\leq M}}&\; K(z,y)
\end{align}
where \[K(z,y):=F_{X}(y)+F_{0}\left(w_0-(1+\varrho)\int_z^{y} g(S_X(t))\dt-\xi-z\right).\] In particular, if the pair $(\tilde{z},\tilde{y})$ is optimal to Problem \eqref{original:problem:equi:equi:equiv}, then the optimal $\tilde{\pi}_0$ in Problem \eqref{original:problem:equi:equi} is $\tilde{\pi}_0=\EE^g[R_{\tilde{z},\tilde{y}}(X)]$, and thus the optimal premium for the original problem \eqref{original:problem} is $\pi^*=(1+\varrho)(\EE^g[X]-\tilde{\pi}_{0})$. 

Noting that, for each fixed $y\in [0, M]$, maximizing $K(z,y)$ over $z\in [0, y]$ is equivalent to minimizing
\[z\mapsto (1+\varrho)\int_z^{y} g(S_X(t))\dt+z.\]
As the above function is convex, one minimizer is $z^{*}=\min\{y,z_{0}\}$, where
\begin{align}\label{z_0}
z_0:=\sup\big\{z\in \mathbb{R}: (1+\varrho)g(S_X(z))\geq 1\big\}.
\end{align}
Now, the optimization problem \eqref{original:problem:equi:equi:equiv} reduces to a simple scalar optimization problem over a compact interval
\begin{align}\label{original:problem:equi:equi:equiv:scalar}
\sup_{\substack{0\leq y\leq M}}\; K\big(\min\{y,z_{0}\},y\big)
=\max\left\{\sup_{\substack{0\leq y\leq z_{0}}}\; K(y,y),\sup_{\substack{z_{0}\leq y\leq M}}\; K(z_{0},y)\right\}.
\end{align}
Combining all the above results, we are able to present the main result of this subsection.
\begin{thm}\label{thm:backgroundR}
	Assume $y^*\in[0,M]$ solves Problem \eqref{original:problem:equi:equi:equiv:scalar}, then an optimal solution to Problem \eqref{original:problem} is given by
	\begin{equation}\label{optimal:insurance:solution}
	I^*_{z^*,y^*}(x)=\left\{
	\begin{array}{ll}
	0, &\quad 0 \leq x\leq z^*;\\
	x-z^*,&\quad z^*<x\leq y^*;\\
	y^*-z^*,&\quad y^*<x,
	\end{array}
	\right.
	\end{equation}
	where $z^{*}=\min\{y^*,z_{0}\}$. Moreover, if the optimal value of Problem \eqref{original:problem:equi:equi:equiv:scalar} is 1, then all feasible policies are indifferent in the sense that the objective value of Problem \eqref{original:problem} is $0$ for any feasible policy.
\end{thm}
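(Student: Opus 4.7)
The plan is to deduce the theorem directly from the reduction chain developed in the paragraphs immediately preceding it. Given $y^*\in[0,M]$ solving the scalar problem \eqref{original:problem:equi:equi:equiv:scalar}, I would first argue that $(z^*,y^*)$ with $z^*=\min\{y^*,z_0\}$ solves the two-dimensional problem \eqref{original:problem:equi:equi:equiv}. This relies on the partial-minimization already sketched: for each fixed $y$, the map $z\mapsto(1+\varrho)\int_z^y g(S_X(t))\,\dd t+z$ is convex on $[0,y]$ because its derivative $1-(1+\varrho)g(S_X(z))$ is increasing in $z$ (since $g$ is increasing and $S_X$ decreasing), so its minimizer on $[0,y]$ equals $\min\{y,z_0\}$ by the definition \eqref{z_0} of $z_0$. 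Consequently, maximizing $K(z,y)$ over the triangle $\{0\leq z\leq y\leq M\}$ collapses to the one-dimensional problem in $y$ alone.

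Next I would recover the reinsurance contract. The pair $(z^*,y^*)$ corresponds to the retained loss function $R_{z^*,y^*}\in\setr$ defined in \eqref{optimal:R}, and a direct piecewise computation shows that $I^*(x):=x-R_{z^*,y^*}(x)$ takes exactly the form \eqref{optimal:insurance:solution}, with $I^*\in\mathcal{I}$ (nondecreasing with slope in $[0,1]$ and $I^*(0)=0$). Setting $\tilde{\pi}_0=\EE^g[R_{z^*,y^*}(X)]$ and $\pi^*=(1+\varrho)(\EE^g[X]-\tilde{\pi}_0)$ makes the constraint of \eqref{original:problem:equi:equi:equiv} hold with equality. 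Unwinding the reduction chain backwards through \eqref{original:problem:equi:equi}, \eqref{original:problem:equi}, \eqref{prob:cdf:2}, and \eqref{optimization:rent}, I conclude that $I^*_{z^*,y^*}$ attains the supremum in the original problem \eqref{original:problem}.

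For the second assertion, the key observation is that by construction the original and scalar objectives are tied together by
\[
\sup_{I\in\mathcal{I}}\inf_{Y\sim F_{0}}\BP{W_I(X,Y)\geq \xi}=\sup_{0\leq z\leq y\leq M}\bigl(K(z,y)-1\bigr).
\]
If the optimal value of \eqref{original:problem:equi:equi:equiv:scalar} equals $1$, then the right-hand side equals $0$. Because the inner infimum is a probability and hence nonnegative for every $I\in\mathcal{I}$, and its supremum over $I$ equals $0$, we must have $\inf_{Y\sim F_{0}}\BP{W_I(X,Y)\geq \xi}=0$ for every feasible $I$, which is precisely the claimed indifference.

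The hard part here is essentially bookkeeping rather than mathematical depth: one must verify that each arrow in the reduction chain is a genuine equivalence, so that optima transfer cleanly in both directions, and that the piecewise algebra turning $R_{z^*,y^*}$ into $I^*_{z^*,y^*}$ is consistent across the three regions. Since these pieces are already laid out above, the proof amounts to assembling them in the right order.
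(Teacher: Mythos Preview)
Your proposal is correct and follows essentially the same route as the paper's own proof, which simply cites Proposition~\ref{prop:35} and the preceding reduction for the first part and calls the second part ``obvious.'' You have merely spelled out the bookkeeping (the convexity argument for the $z$-minimization, the piecewise identification $I^*=x-R_{z^*,y^*}$, and the nonnegativity squeeze for the indifference claim) that the paper leaves implicit.
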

\begin{proof}
The first part of this theorem follows immediately from Proposition \ref{prop:35} and the discussion after Proposition \ref{prop:35}. For the second part, if the optimal value of Problem \eqref{original:problem:equi:equi:equiv:scalar} is 1, which corresponds to the objective value of Problem \eqref{original:problem} being $0$ for any feasible policy, it is obvious that all feasible contracts are indifferent to the original problem \eqref{original:problem}.
\end{proof}

\begin{remark}\label{remark:reinsurance}
By Theorem~\ref{thm:backgroundR}, the analysis of an infinite-dimensional optimal reinsurance problem (i.e., Problem (3.2)) is simplified to solving Problem~\eqref{original:problem:equi:equi:equiv:scalar}. Admittedly, it is very hard to obtain an analytical solution to Problem \eqref{original:problem:equi:equi:equiv:scalar} as its objective replies heavily on the cumulative distribution functions of $X$ and $Y$. However,
such a one-dimensional optimization problem can be numerically solved quickly. Moreover, if $z_0=M$, then Theorem \ref{thm:backgroundR} indicates that no reinsurance is optimal according to \eqref{optimal:insurance:solution}. If $z_0<M$, then Theorem \ref{thm:backgroundR} further shows that the optimal reinsurance contract can be a stop-loss contract (i.e., $y^*=M$), a stop-loss contract with an upper limit (i.e., $z_0<y^*<M$), or no reinsurance (i.e., $y^*\leq z_0$). Comparing Theorem \ref{solution:without} and Theorem \ref{thm:backgroundR}, we can find that these three forms of reinsurance treaties are robust in the sense that they are the same with and without background risk. However, similar to the findings in Remark \ref{remark:portfolio}, the presence of background risk affects the optimal reinsurance premium and the optimal reinsurance treaty, which heavily depends on $F_0$.

\end{remark}

\subsection{Numerical examples}\label{section:numerical}

In this subsection, numerical examples are given to illustrate the results obtained in the previous two subsections. Moreover, we will show numerically that the optimal contract obtained in Section \ref{section:reinsurance:background} performs robustly in some sense.

\subsubsection{Illustration of optimal reinsurance contracts}\label{subsection:341}
We assume that the random loss $X$ follows a truncated and shifted Pareto distribution with the probability density function\footnote{\citet{Bahnemann2015}, an actuarial report from Casualty Actuarial Society, points out that one of the most popular probability distributions used to model the size of insurance claims is the truncated and shifted Pareto distribution.}
\begin{align}\label{pdf:pareto}
f_X(x)=\frac{24}{7}\frac{10^3}{(x+10)^{3+1}}\idd{\{x\in [0,10]\}}.
\end{align}
Moreover, we set $$w_0=20,\quad \varrho=0.1, \quad \text{and} \quad g(x)=x^{0.5}.$$ A simple calculation shows that $\pi^g(X)=5.187$. To avoid trivial cases, we consider nine different levels of the goal $$\xi\in\{15,\; 15.5,\; 16,\; 16.5,\; 17,\; 17.5,\; 18,\; 18.5,\; 19\}.$$ They all satisfy Assumption \ref{ass:avoidtri}.

First of all, we consider the case without background risk. With the help of Theorem \ref{solution:without}, the optimal premium, the optimal objective value and the optimal reinsurance contract under different levels of the goal are calculated and summarized in Table \ref{table1}.
\vspace{-0.3cm}
\begin{center}
	\begin{longtable}{|c|c|c|c|}
		\hline\noalign{\smallskip}
		{Goal level} & Optimal premium & Optimal objective value & {Optimal attachment point and }\\
		$\xi$ & $\pi^*$ & $F_X(q^*)$ & {detachment point} ($\eta^*$, $q^*$)\\
		\hline
		15 & 4.4356 & 0.9690 & $(0.5644,\; 8.7320)$ \\
		\hline
		15.5 & 3.9356 & 0.9047 & $(0.5644,\; 6.8680)$ \\
		\hline
		16 & 3.4356 & 0.8048 & $(0.5644,\; 5.5817)$ \\
		\hline
		16.5 & 2.9356 & 0.7714 & $(0.5644,\; 4.5439)$ \\
		\hline
		17 & 2.4356 & 0.6946 & $(0.5644,\; 3.6608)$ \\
		\hline
		17.5 & 1.9356 & 0.6090 & $(0.5644,\; 2.8882)$ \\
		\hline
		18 & 1.4356 & 0.5135 & $(0.5644,\; 2.2003)$ \\
		\hline
		18.5 & 0.9356 & 0.4069 & $(0.5644,\; 1.5803)$ \\
		\hline
		19 & 0.4356 & 0.2881 & $(0.5644,\; 1.0165)$ \\
		\hline
	 	\end{longtable}
			\captionof{table}{The optimal reinsurance design without background risk for different goal levels.} \label{table1}
	\bigskip
\end{center}
\vspace{-0.8cm}

Notice that, in this example, the attachment point $\eta^*$ of the optimal layer reinsurance is not affected by the change of the goal level. This is because, according to \eqref{phi:derivative}, $\pi^*$ that solves Problem \eqref{problem:Phi} should satisfy $$\psi'(\pi^*)=g\big(S_X(w_0-\pi^*-\xi)\big)-\frac{1}{1+\varrho}=g\big(S_X(\eta^*)\big)-\frac{1}{1+\varrho}=0,$$ provided that it is an interior point of $[0,w_0-\xi]$. The above equation implies $\eta^*$ and $\pi^*+\xi$ remain unchangeable. In contrast, the optimal detachment point is decreasing in $\xi$. In other words, for a larger goal level, the insurer will purchase less reinsurance coverage and pay less reinsurance premium.

Next, we analyze the optimal distributionally robust reinsurance design when a background risk is incorporated. For the comparison purpose, the background risk is assumed to have zero mean and the probability density function $$f_0(x)=\frac{\phi(x)}{\mathcal{N}(5)-\mathcal{N}(-5)}\idd{\{x\in [-5, 5]\}},$$ where $\phi(x)$ and $\mathcal{N}(x)$ are the probability density function and cumulative distribution function of the standard normal distribution, respectively. Thanks to Theorem~\ref{thm:backgroundR}, the optimal premium, optimal objective value and optimal reinsurance contract under different levels of the goal in the worst-case scenario are obtained numerically and listed in Table \ref{table2}.
\vspace{-0.3cm}

\addtocounter{table}{-1}
\begin{center}
	\begin{longtable}{|c|c|c|c|}
		\hline
		{Goal level} & Optimal premium & Optimal objective value & {Optimal attachment point and}\\
$\xi$& $\pi^*$& $F_X(y^*)-F_{\pi^*}(z^*)$ & {detachment point} $(z^*, y^*)$\\
		\hline
		15 & 3.0064 & 0.7051 & (0.5644,\; 4.6801) \\
		\hline
		15.5 & 2.5712 & 0.6300 & (0.5644,\; 3.8884) \\
		\hline
		16 & 2.1416 & 0.5476 & (0.5644,\; 3.1950) \\
		\hline
		16.5 & 1.7162 & 0.4571 & (0.5644,\; 2.5772) \\
		\hline
		17 & 1.2945 & 0.3577 & (0.5644,\; 2.0197) \\
		\hline
		17.5 & 0.8775 & 0.2488 & (0.5644,\; 1.5120) \\
		\hline
		18 & 0.4657 & 0.1295 & (0.5664,\; 1.0493) \\
		\hline
		18.5 & Any value & 0 & Any feasible contract \\
		\hline
		 19 & Any value & 0 & Any feasible contract \\
		\hline 
	\end{longtable}
	\captionof{table}{The optimal reinsurance design with background risk in the worst-case scenario for different goal levels.}\label{table2}
	\bigskip
\end{center}
\vspace{-0.8cm}

Comparing Tables \ref{table1} and \ref{table2}, we can conclude that incorporating the zero-mean background risk does affect the optimal reinsurance design. Particularly, when $\xi=18.5$ or $19$, it becomes impossible to achieve the goal in the presence of background risk, and therefore all feasible contracts are indifferent in Problem \eqref{original:problem} according to Theorem~\ref{thm:backgroundR}. It is worth mentioning that, similar to Table~\ref{table1}, the optimal attachment point $z^*$ in Table~\ref{table2} is not influenced by the change of the goal level within $\{15,\; 15.5,\; 16,\; 16.5,\; 17,\; 17.5,\; 18\}.$ Such a phenomenon can be explained by noting that $y^*>z_0=0.5644$. In addition, the optimal detachment point $y^*$ still decreases in the goal level. However, $\pi^*+\xi$ is no longer a constant.

\subsubsection{Robustness analysis}

In this part, we will follow the way of \citet{Zhu2009} and \citet{kang2019} to carry out a robustness analysis of the solution to Problem \eqref{original:problem}. Noting that such a solution is obtained under the worst-case dependence between the retained risk and the background risk, we simply call it ``robust contract''. In addition to the worst-case scenario, we shall introduce a ``nominal scenario''. When facing a risk aggregation problem with an uncertain dependence structure of risks, actuarial researchers typically make a conservative assumption that risks are comonotonic.\footnote{One can refer to \citet{Dhaene2002} for some detailed discussions of comonotonic risk aggregation in actuarial practice.} Accordingly, we choose the comonotonic dependence structure as the ``nominal scenario'', and analyze the optimal reinsurance contract (denoted as ``nominal contract'') for this scenario in Appendix \ref{section:comono}.

We use the same parameter values and the same distributions of $X$ and $Y$ as in Section \ref{subsection:341}. Under this setting, in the nominal scenario, we therefore have $Y=F_Y^{-1}(F_X(X))$ almost surely. Thanks to Theorem~\ref{thm:C1}, we solve the optimal reinsurance problem with the nominal dependence (i.e., Problem \eqref{original:problem:robustcheck}) numerically, and the corresponding optimal reinsurance premium, optimal objective value and optimal reinsurance contract for different goal levels are displayed in Table \ref{table3}. Analogous to Tables~\ref{table1} and \ref{table2}, the optimal attachment point is not affected by the change of the goal level, and the insurer will purchase less reinsurance coverage for a larger goal level.
\vspace{-0.3cm}

\addtocounter{table}{-1}
\begin{center}
	\begin{longtable}{|c|c|c|c|}
		\hline
		{Goal level} & Optimal premium & Optimal objective value & {Optimal attachment point and}\\
		$\xi$& $\pi^*$& $F_X(b^*)$ & {detachment point} $(z_0, b^*)$\\
		\hline
		15 & 3.4372 & 0.8410 & (0.5644,\; 5.5853) \\
		\hline
		15.5 & 3.1080 & 0.7960 & (0.5644,\; 4.8811) \\
		\hline
		16 & 2.7708 & 0.7469 & (0.5644,\; 4.2383) \\
		\hline
		16.5 & 2.4296 & 0.6936 & (0.5644,\; 3.6509) \\
		\hline
		17 & 1.2955 & 0.6360 & (0.5644,\; 3.1133) \\
		\hline
		17.5 & 1.7480 & 0.5744 & (0.5644,\; 2.6212) \\
		\hline
		18 & 1.4131 & 0.5090 & (0.5644,\; 2.1710) \\
		\hline
		18.5 & 1.0860 & 0.4402 & (0.5644,\; 1.7604) \\
		\hline
		19 &0.7701 & 0.3690 & (0.5644,\; 1.3879) \\
		\hline 
	\end{longtable}
	\captionof{table}{The optimal reinsurance design with background risk in the nominal scenario for different goal levels.}\label{table3}
	\bigskip
\end{center}
\vspace{-0.8cm}

We now compare the worst-case performances (in terms of the objective value in the worst-case scenario) of the robust contract and the nominal contract. Notably, the worst-case objective value of the robust contract has been reported in Table \ref{table2}. We only need to derive the worst-case objective value of the nominal contract by substituting the nominal contract into the optimal robust reinsurance design problem (i.e., Problem \eqref{original:problem}). Therefore, it follows from \eqref{robust:objective} that, the worst-case objective value of the nominal contract becomes
\begin{align*} 
\inf_{Y\sim F_{0}}&\;\BP{w_0-Y-X+I(X)-\pi^g(I(X))\geq \xi}=\inf_{\Xi\sim F_{\pi}}\BP{R(X)\leq \Xi}=\sup_{z\in [0,M]} \big(F_{R(X)} (z)-F_{\pi}(z)\big),
\end{align*}
where $I(x)$ is the nominal contract and $R(x)$ is the corresponding retained loss function. Figure \ref{figure:worstcasepareto} depicts the worst-case performances of these two contracts.

\begin{figure}[H]\label{worstcasepareto}
	\centering
	\includegraphics[width=4 in]{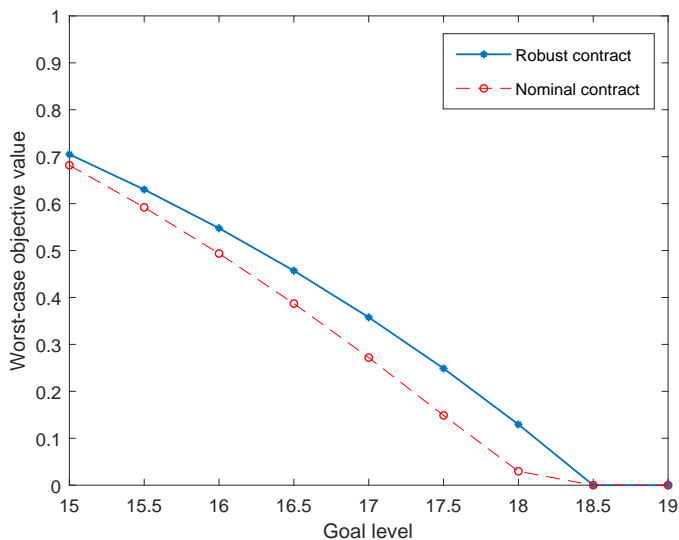}
	\caption{Comparison of worst-case performances of the robust contract and the nominal contract under different goal levels.}\label{figure:worstcasepareto}
\end{figure}

For the purpose of comparison, we then calculate the nominal objective value of the robust contract by substituting the robust contract into the optimal reinsurance design problem with a comonotonic dependence (i.e., Problem \eqref{original:problem:robustcheck}). Together with the nominal objective value of the nominal contract in Table \ref{table3}, we present the nominal performances (in terms of objective values in the nominal scenario) of the robust contract and the nominal contract in Figure \ref{figure:nominalpareto}.

\begin{figure}[H]\label{nominalpareto}
	\centering
	\includegraphics[width=4 in]{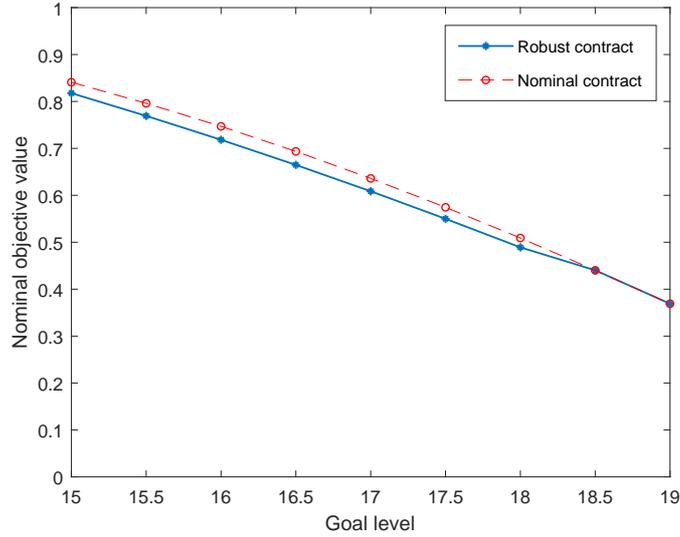}
	\caption{Comparison of nominal performances of the robust contract and the nominal contract under different goal levels.}\label{figure:nominalpareto}
\end{figure}

We can observe from Figures \ref{figure:worstcasepareto} and \ref{figure:nominalpareto} that, when the goal level is not extremely high, the worst-case performance gap between the robust contract and the nominal contract is larger than the nominal performance gap between these two contracts. In other words, the performance of the robust contract is very close to that of the nominal contract in the nominal scenario, whereas the performance of the robust contract is much better than that of the nominal contract in the worst-case scenario. In this sense, we numerically demonstrate that ``the robust contract'' performs more robustly than ``the nominal contract''. Such robustness can also be explained from another perspective: the objective value of the robust contract drops less than that of the nominal contract when the scenario changes from nominal to worst-case \citep[see, e.g.,][]{Gorissen2015}.

To test the sensitivity of this phenomenon, we will tweak the parameter values in the following numerical examples. First, we set the goal level to be $17$ and employ ten different levels of the safety loading coefficient $$\varrho\in\{0.02,\;0.04,\;0.06,\;0.08,\;0.1,\;0.12,\;0.14,\;0.16,\;0.18,\;0.2\},$$ while keeping other parameters unchanged. Figures \ref{figure:worstcasesafety} and \ref{figure:nominalsafety} compare the worst-case and nominal performances of the robust contract and the nominal contract under different levels of the safety loading coefficient, respectively.

\begin{figure}[H]\label{worstcasesafety}
	\centering
	\includegraphics[width=4 in]{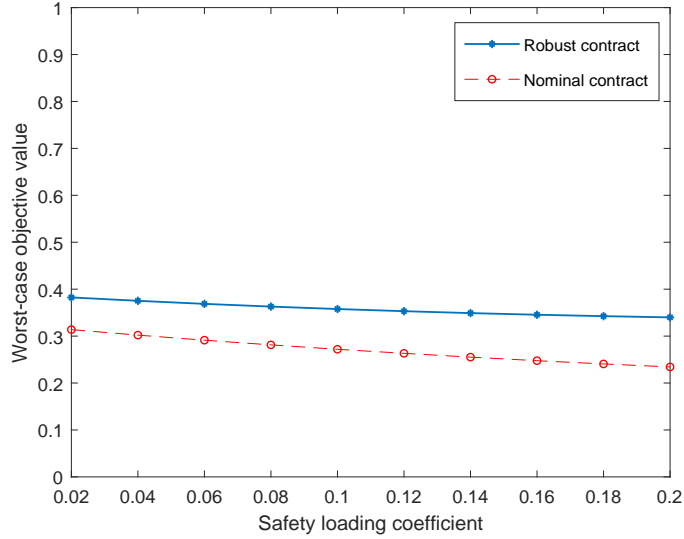}
	\caption{Comparison of worst-case performances of the robust contract and the nominal contract under different levels of the safety loading coefficient.}\label{figure:worstcasesafety}
\end{figure}

\begin{figure}[H]\label{nominalsafety}
	\centering
	\includegraphics[width=4 in]{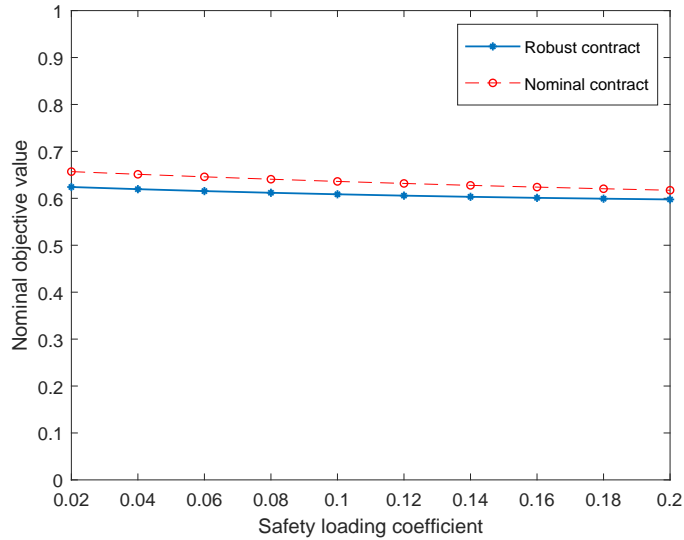}
	\caption{Comparison of nominal performances of the robust contract and the nominal contract under different levels of the safety loading coefficient.}\label{figure:nominalsafety}
\end{figure}

The same phenomenon can be discovered from Figures \ref{figure:worstcasesafety} and \ref{figure:nominalsafety}. More specifically, the robust contract performs much better than the nominal contract in the worst-case scenario, and the performance gap becomes larger as the safety loading coefficient increases; however, in the nominal scenario, the performance of the robust contract is very close to that of the nominal contract. Therefore, these observations support the robustness of the robust contract relative to the nominal contract. Furthermore, we can see that the objective values are decreasing in the safety loading coefficient in both figures. In other words, as the reinsurance becomes more costly, it is less likely that the insurer can reach the goal even though the optimal strategy is chosen.

Next, we test whether such a performance gap phenomenon is insensitive to the change of the distribution of the insurable risk $X$. In the previous examples, the random loss $X$ is assumed to follow a truncated and shifted Pareto distribution with a scale parameter $\beta=10$ and a shape parameter $\gamma=3$, i.e., $f_X(x)=\frac{\gamma\beta^\gamma/(\beta+x)^{\gamma+1}}{1-\beta^\gamma/(\beta+10)^{\gamma}}\idd{\{x\in [0,10]\}}$. By fixing $\xi=17$ and $\varrho=0.1$, we set a range of the shape parameter to be $$\gamma\in\{2,\;2.2,\;2.4,\;2.6,\;2.8,\;3,\;3.2,\;3.4,\;3.6,\;3.8,\;4\},$$ while other parameters remain unchanged. Figures \ref{figure:worstcaseshape} and \ref{figure:nominalshape} illustrate the worst-case and nominal performances of the robust contract and the nominal contract under different levels of the shape parameter, respectively.

\begin{figure}[H]\label{worstcaseshape}
	\centering
	\includegraphics[width=4 in]{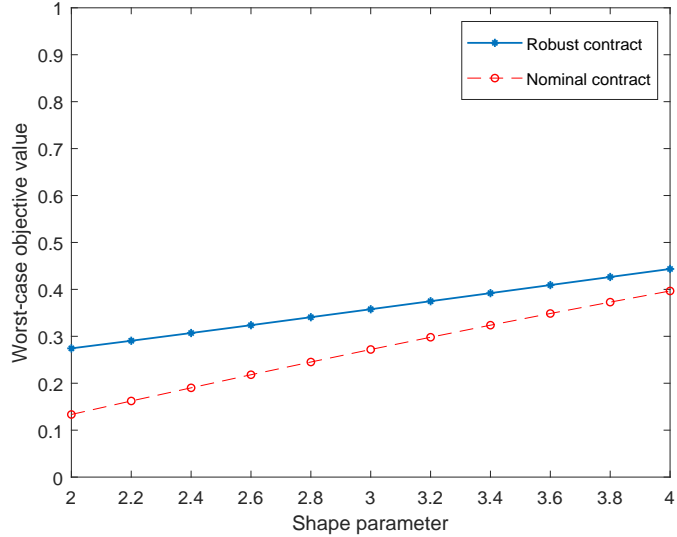}
	\caption{Comparison of worst-case performances of the robust contract and the nominal contract under different levels of the shape parameter.}\label{figure:worstcaseshape}
\end{figure}

\begin{figure}[H]\label{nominalshape}
	\centering
	\includegraphics[width=4 in]{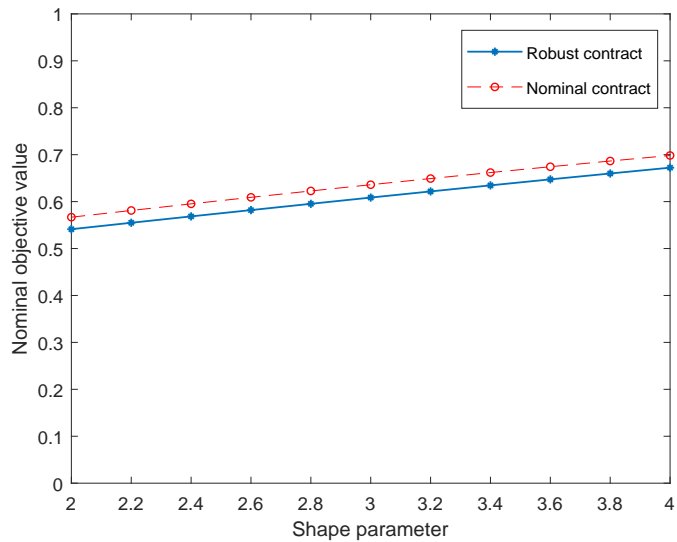}
	\caption{Comparison of nominal performances of the robust contract and the nominal contract under different levels of the shape parameter.}\label{figure:nominalshape}
\end{figure}

An interesting finding is that, the nominal performance gap between the robust contract and the nominal contract is quite stable with regard to the change of the shape parameter of the Pareto distribution, while the worst-case performance gap between these two contracts can alter significantly. Moreover, we can see from Figures \ref{figure:worstcaseshape} and \ref{figure:nominalshape} that, the performance gap phenomenon still exists when the shape parameter varies. This phenomenon is particularly more pronounced with a lower shape parameter. Therefore, we numerically demonstrate that the relative robustness of the robust contract to the nominal contract does not depend on the choice of parameter values.

\section{Conclusion}\label{Conclusion}
In this paper, we have examined the effect of background risk on portfolio selection and optimal reinsurance design in goal-reaching models with dependence uncertainty. This is motivated by the fact that the dependence structure between risks or assets is often ambiguous \citep[]{Embrechts2013}, although their marginal distributions can be estimated with high accuracy. While the exposure to background risk does not alter the form of the optimal payoff of portfolio selection or the optimal reinsurance contract, it does change the parameter value of the optimal solution. Numerical analyses have also been conducted to study the robustness of our solutions.

Some assumptions are imposed on the targeted wealth and the dependence uncertainty in order to derive the robust decisions explicitly in this paper. It is meaningful to revisit these goal-reaching problems by relaxing these assumptions. More specifically, the wealth target $\xi$ can be random, and partial dependence uncertainty can be used to model dependence ambiguity if additional dependence information is available \citep[see, e.g.,][]{Pflug2018}. Furthermore, the optimization criterion can also be changed to maximizing the decision maker's final wealth with other risk preferences, such as rank-dependent expected utility \citep[see, e.g.,][]{Bernard2015,Xu2019}. We leave these problems for future research.

\vskip .5 cm
\noindent {\large \bf Acknowledgments}
\vskip 0.15 true cm

The authors are grateful to an editor and two anonymous referees for their helpful suggestions that greatly improve the quality of the paper.

\section*{Appendix}
\appendix


\section{A technical result}
\noindent
\begin{lemma}\label{uniformcontinuity}
For any continuous probability distribution function $F$, we have
\[\lim_{\Delta\to0}\sup_{z\in\R} \big|F(z+\Delta)-F(z)\big|=0.\]
\end{lemma}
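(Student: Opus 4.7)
My plan is to prove that any continuous cumulative distribution function $F$ is uniformly continuous on $\mathbb{R}$, which is exactly the stated conclusion. The strategy is to combine the Heine--Cantor theorem on a sufficiently large compact interval with the vanishing of the tails of $F$.

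First I would fix $\varepsilon>0$ and exploit the boundary behavior of $F$: since $F$ is monotone with $\lim_{z\to-\infty}F(z)=0$ and $\lim_{z\to+\infty}F(z)=1$, I can choose real numbers $a<b$ such that $F(a)<\varepsilon/3$ and $1-F(b)<\varepsilon/3$. On the compact interval $[a-1,\,b+1]$ the continuous function $F$ is uniformly continuous by Heine--Cantor, so there exists $\delta\in(0,1)$ such that $|F(z)-F(z')|<\varepsilon/3$ whenever $z,z'\in[a-1,b+1]$ and $|z-z'|<\delta$.

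Next I would split the supremum into regions. By the monotonicity of $F$ it suffices to bound $F(z+\Delta)-F(z)$ for $\Delta\in(0,\delta)$. For a given $z$: if $z+\Delta\leq a$, then $0\leq F(z+\Delta)-F(z)\leq F(a)<\varepsilon/3$; if $z\geq b$, then $0\leq F(z+\Delta)-F(z)\leq 1-F(b)<\varepsilon/3$; and in the remaining case we have $z+\Delta>a$ and $z<b$, which together with $\Delta<\delta<1$ forces both $z$ and $z+\Delta$ to lie in $[a-1,b+1]$, so the uniform-continuity bound on the compact interval yields $|F(z+\Delta)-F(z)|<\varepsilon/3$. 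Taking the supremum over $z$ and then letting $\varepsilon\to 0$ gives the claim.

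The only delicate point is the boundary case where $z$ or $z+\Delta$ straddles the edges of $[a,b]$; the slack of $1$ in the interval $[a-1,b+1]$ (combined with choosing $\delta<1$) is what ensures that the compact-interval uniform continuity still applies and no extra case analysis is required. I expect the proof to be short and self-contained, with no deeper obstruction than organizing the case split cleanly.
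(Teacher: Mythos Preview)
Your proposal is correct and follows essentially the same approach as the paper: both arguments pick a large compact interval using the tail behavior of $F$, apply Heine--Cantor there, and handle the tails by monotonicity. The only differences are cosmetic (your $[a-1,b+1]$ with $\delta<1$ versus the paper's $[-3N,3N]$ with $\delta<N$, and a slightly different case split).
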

\begin{proof}
	For any $\epsilon>0$, there exists an $N>0$ such that
	\[(1-F(N))+F(-N)<\epsilon.\]
	Because $F$ is a continuous function, it is uniformly continuous on $[-3N,3N]$, so there exists $0<\delta<N$ such that
	\[|F(z_{1})-F(z_{2})|<\epsilon,\quad \text{ for any $z_{1}$, $z_{2}\in [-3N,3N]$ with $|z_{1}-z_{2}|<\delta$.}\]
	By the above inequalities, for $|\Delta|<\delta$, we have
	\[|F(z+\Delta)-F(z)|\leq F(z+|\Delta|)\leq F(-N)<\epsilon,\quad \text{if $z\leq -2N$;}\]
	and
	\[|F(z+\Delta)-F(z)|\leq 1-F(z+\Delta)+1-F(z)\leq 1-F(N)+1-F(2N)<2\epsilon,\quad \text{if $z\geq 2N$;}\]
	and
	\[|F(z+\Delta)-F(z)|<\epsilon,\quad \text{if $-2N\leq z\leq 2N$.}\]
	Therefore, for any $|\Delta|<\delta$, one has
	\[\sup_{z\in\R} \big|F(z+\Delta)-F(z)\big|\leq 2\epsilon,\]
	which gives the conclusion.
\end{proof}
\section{Fr\'{e}chet problems}\label{AppendixB}
\noindent
In the literature, the problem of finding extremal joint distributions with given marginals is typically referred to as the Fr\'{e}chet problem. The study of this problem has a long history. We refer to the survery paper and the book by \cite{Ruschendorf1991,Ruschendorf2013} for the detailed treatments, historical remarks and many applications in different areas of the Fr\'{e}chet problems. Specifically, let $$l_s:=\inf\left\{\mathbb{P}(X_1+X_2\geq s): X_1 \sim F^1, X_2\sim F^2 \right\},$$ and $$u_s:=\sup\left\{\mathbb{P}(X_1+X_2\geq s): X_1 \sim F^1, X_2\sim F^2 \right\},$$ where $F^i$ is the known marginal distribution of $X_i$ for $i=1,2$. Here, $l_s$ and $u_s$ are the lower and upper Fr\'{e}chet bound of the sum $X_1+X_2$ in $s$ over all possible dependence structures respectively. Notably, such a problem has been solved independently in \citet{Makarov1981} and \citet{Ruschendorf1982}. While this problem is not the main focus of this paper, we revisit it in our context and provide a self-contained and elementary proof for completeness.

For any given distribution function $F_{V}$ and a random variable $W$, we consider the following two classes of optimization problems: 
\[\text{A:}\quad\sup_{ Y\sim F_{V}} \;\mathbb{P}(W\leq V),\quad\sup_{ Y\sim F_{V}} \;\mathbb{P}(W\geq V),\quad
\inf_{ Y\sim F_{V}} \;\mathbb{P}(W<V),\quad\inf_{ Y\sim F_{V}} \;\mathbb{P}(W>V),\]
and
\[\text{B:}\quad\sup_{ Y\sim F_{V}} \;\mathbb{P}(W< V),\quad\sup_{ Y\sim F_{V}} \;\mathbb{P}(W> V),\quad
\inf_{ Y\sim F_{V}} \;\mathbb{P}(W\leq V),\quad\inf_{ Y\sim F_{V}} \;\mathbb{P}(W\geq V).\]
By the following identities
\begin{align*}
\sup_{ V\sim F_{V}} \;\mathbb{P}(W\geq V)&=\sup_{ V\sim F_{V}} \;\mathbb{P}(-W\leq -V),\\
\inf_{ V\sim F_{V}} \;\mathbb{P}(W<V)&=1-\sup_{ V\sim F_{V}} \;\mathbb{P}(-W\leq -V),\\
\inf_{ V\sim F_{V}} \;\mathbb{P}(W>V)&=1-\sup_{ V\sim F_{V}} \;\mathbb{P}(W\leq V),
\end{align*}
we can see that four problems in Class A are essentially equivalent. Similarly, one can show that four problems in Class B are equivalent as well.
We now endeavor to solve the first problems in Class A and Class B, respectively. Sometimes we need the following technical assumptions:
\begin{assmp}\label{assumption:backgroundrisk2}
$F_{W}(x)$ is continuous.
\end{assmp}
\begin{assmp}\label{assumption:backgroundrisk3}
	$F_{V}(x)$ is continuous.
\end{assmp}

\subsection{Problems in Class A }\label{appendix:B1}
\noindent
In this part, we want to study the following problem in Class A:
\begin{align}\label{problem:original}
\sup_{ V\sim F_{V}} \;\mathbb{P}(W\leq V).
\end{align}
For any $z\in\R$ and $V\sim F_{V}$, we have
\begin{align*}
\BP{W\leq V}&=\BP{ V>z,W\leq V} +\BP{V\leq z,W\leq V} \\
&\leq \BP {V>z}+\BP{W\leq z}=1-F_{V}(z)+F_W(z).
\end{align*}
By minimizing the right-hand side of the above equation with respect to $z\in\R$, we deduce an upper bound for the optimal value of Problem \eqref{problem:original}
\begin{align*}
\BP {W\leq V}\leq \inf_{z\in\R} (1-(F_{V}(z)-F_W(z)))
=1-\sup_{z\in\R} \big(F_{V}(z)-F_W(z)\big)=1-\alpha,
\end{align*}
where
\begin{align}\label{problem:sub}
\alpha&:=\sup_{z\in\R} \big(F_{V}(z)-F_W(z)\big).
\end{align}
It is easily seen that $\alpha\in [0,1]$. We will show below that $1-\alpha$ is in fact the optimal value of Problem \eqref{problem:original} by construction.

\par
The following lemma is the key to construct a solution to Problem \eqref{problem:original}.
\begin{lemma}\label{lemma:inequality}
Under Assumption \ref{assumption:backgroundrisk3} or Assumption \ref{assumption:backgroundrisk2}, we have $F_{V}^{-1}(t+\alpha)\geq F_W^{-1}(t)$ for any $t\in (0,1]$.
\end{lemma}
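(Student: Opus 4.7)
The plan is to prove the lemma by a short contradiction argument that essentially boils down to the defining property of $\alpha$ being an upper bound. First, I would dispose of the case $t+\alpha>1$: by the footnote's extension convention $F_V^{-1}(s)=+\infty$ for $s>1$, the inequality $F_V^{-1}(t+\alpha)\geq F_W^{-1}(t)$ is then automatic.

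For the substantive case $t+\alpha\leq 1$, I would suppose for contradiction that $F_V^{-1}(t+\alpha)<F_W^{-1}(t)$ and set $z_0:=F_V^{-1}(t+\alpha)$. The two quantile inequalities collected in the Notation section provide exactly what is needed: the identity $F_V(F_V^{-1}(s))\geq s$ (valid for $s\in(0,1]$) yields $F_V(z_0)\geq t+\alpha$; on the other hand, since $F_W^{-1}(t)=\inf\{z:F_W(z)\geq t\}$ and $z_0$ lies strictly below this infimum, we must have $F_W(z_0)<t$. Subtracting, I obtain $F_V(z_0)-F_W(z_0)>(t+\alpha)-t=\alpha$, which contradicts $\alpha=\sup_{z\in\mathbb{R}}(F_V(z)-F_W(z))$ as defined in \eqref{problem:sub}.

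The main (and essentially only) point requiring care is the boundary case $t+\alpha=1$: if the super-level set $\{z:F_V(z)\geq 1\}$ is empty then $F_V^{-1}(1)=+\infty$ and we fall back into the trivial case, while otherwise $z_0$ is finite and the argument runs as written. As a side remark, this sketch does not appear to invoke the continuity hypotheses (Assumption~\ref{assumption:backgroundrisk2} or Assumption~\ref{assumption:backgroundrisk3}) at all; the continuity is presumably used later in the construction that attains the bound $1-\alpha$ of Problem~\eqref{problem:original} (e.g.\ to ensure that the sup defining $\alpha$ behaves well and that the candidate coupling of $V$ with $W$ has the prescribed marginal $F_V$), rather than in the quantile inequality itself, which relies only on the generic monotonicity and infimum properties of $F_V^{-1}$ and $F_W^{-1}$.
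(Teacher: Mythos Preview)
Your proof is correct and takes a genuinely different, more direct route than the paper's. The paper first establishes the intermediate inequality $\alpha\geq F_V(F_W^{-1}(t))-t$ by an $\epsilon$-perturbation and limit argument, and this step really does need one of the continuity hypotheses (for instance, with $F_V=F_W$ having an atom one has $\alpha=0$ but $F_V(F_W^{-1}(t))>t$ for some $t$, so the intermediate inequality fails). The paper then splits into the cases $\alpha=F_V(F_W^{-1}(t))-t$ and $\alpha>F_V(F_W^{-1}(t))-t$ to deduce the quantile comparison. You bypass all of this: your contradiction argument uses only the two generic quantile relations $F_V(F_V^{-1}(s))\geq s$ and $z<F_W^{-1}(t)\Rightarrow F_W(z)<t$, and your remark that the continuity assumptions are not actually invoked is accurate. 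In fact your argument proves the lemma for \emph{arbitrary} distribution functions $F_V,F_W$; the continuity hypotheses are genuinely needed later in the appendix (in \citethm{optimal2} and \citecoro{coro:prob:equality}, to pass between strict and weak inequalities), but not here. Your approach is shorter, avoids the case split, and yields a strictly stronger statement.
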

\begin{proof}
Fix any $t\in (0,1]$. Let us assume $t+\alpha\leq 1$; otherwise, the claim is trivially true.
\par
Suppose Assumption \ref{assumption:backgroundrisk3} holds.
For any $\epsilon>0$, we have $F_W(F_W^{-1}(t)-\epsilon)<t$. By the definition of $\alpha$, it follows that
\[\alpha\geq F_V(F_W^{-1}(t)-\epsilon)-F_W(F_W^{-1}(t)-\epsilon)>F_V(F_W^{-1}(t)-\epsilon)-t.\]
Letting $\epsilon \to 0$, we have $\alpha\geq F_V(F_W^{-1}(t))-t$ as $F_{V}$ is continuous. The same inequality can be proved when Assumption \ref{assumption:backgroundrisk2} holds by letting $\epsilon \to 0$ in $$\alpha\geq F_V(F_W^{-1}(t)+\epsilon)-F_W(F_W^{-1}(t)+\epsilon)\geq F_V(F_W^{-1}(t))-F_W(F_W^{-1}(t)+\epsilon).$$
\par
Note that $F_{V}(z)-F_W(z)\leq \alpha$ for any $z\in\R$.
If $\alpha=F_V(F_W^{-1}(t))-t$, then for any $z<F_{W}^{-1}(t)$, one has
\[F_{V}(z)\leq \alpha +F_W(z)=F_V(F_W^{-1}(t))-t+F_W(z)<F_V(F_W^{-1}(t)),\]
which implies
\[F_{V}^{-1}(\alpha+t)=F_{V}^{-1}\big(F_V(F_W^{-1}(t)) \big)=\inf\big\{z\in\R: F_{V}(z)\geq F_V(F_W^{-1}(t)) \big\}=F_W^{-1}(t).\]
Otherwise, if $\alpha> F_V(F_W^{-1}(t))-t$, then
\[F_{V}^{-1}(\alpha+t)=\inf\big\{z\in\R: F_{V}(z)\geq \alpha+t\big\}\geq \inf\big\{z\in\R: F_{V}(z)>F_{V}(F_W^{-1}(t))\big\}\geq F_W^{-1}(t).\]
The claim is thus proved.
\end{proof}

\par
We are now ready to construct a solution to Problem \eqref{problem:original}. When the probability space is atom-less, it is well-known that there exists a $Z\sim U(0,1)$ such that $W=F_W^{-1}(Z)$ almost surely. Let us define
\begin{align}\label{optimal:sol:1}
\widetilde{V} :=\begin{cases}
F_{V}^{-1}(Z+\alpha), &\quad \text{ if } Z \leq 1-\alpha;\medskip \\
F_{V}^{-1}(1-Z), &\quad \text{ if } Z>1-\alpha.
\end{cases}
\end{align}
Then by Lemma \ref{lemma:inequality}, we have $\widetilde{V}\geq W$ almost surely on the set $\big\{Z\leq 1-\alpha\big\}$. Hence, \[\BP {W \leq \widetilde{V}}\geq \BP {Z\leq 1-\alpha}=1-\alpha.\]
This would imply that $\widetilde{V}$ is an optimal solution to Problem \eqref{problem:original} if we could show $\widetilde{V}$ follows the distribution $F_{V}$. In fact, for any $z\in \R$,
\begin{align*}
\BP{\widetilde{V}\leq z }&=\BP{\widetilde{V}\leq z,Z\leq 1-\alpha}+\BP{\widetilde{V}\leq z,Z>1-\alpha} \\
&=\BP{F_{V}^{-1}(Z+\alpha)\leq z,Z\leq 1-\alpha }+\BP{F_{V}^{-1}(1-Z)\leq z,Z>1-\alpha } \\
&=\BP{Z+\alpha\leq F_{V}(z),Z\leq 1-\alpha }+\BP{1-Z\leq F_{V}(z),1-Z<\alpha } \\
&=\BP{Z\leq F_{V}(z)-\alpha,Z\leq 1-\alpha }+\BP{1-Z\leq F_{V}(z),1-Z<\alpha } \\
&=\BP{Z\leq F_{V}(z)-\alpha }+\BP{1-Z\leq F_{V}(z),1-Z<\alpha } \\
&=\max\big\{0,F_{V}(z)-\alpha\big\}+ \min\big\{F_{V}(z),\alpha\big\}\\
&=F_{V}(z).
\end{align*}

Collecting the above arguments, we have the following theorem.
\begin{thm}\label{optimal1}
Suppose Assumption \ref{assumption:backgroundrisk3} or Assumption \ref{assumption:backgroundrisk2} holds, then
$\widetilde{V}$ defined in \eqref{optimal:sol:1} solves Problem \eqref{problem:original} with the optimal value
\begin{align*}
\sup_{ V\sim F_{V}} \;\mathbb{P}(W\leq V)=1-\sup_{z\in\R} \big(F_{V}(z)-F_W(z)\big).
\end{align*}
\end{thm}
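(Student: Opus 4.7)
The plan is a standard two-sided argument: first establish $1-\alpha$ as an \emph{a priori} upper bound for the supremum that holds uniformly over all couplings, then construct an explicit coupling $\widetilde V$ that attains it. For the upper bound, I would split the event $\{W\leq V\}$ using a free threshold $z\in\R$:
\[
\BP{W\leq V}=\BP{W\leq V,\, V>z}+\BP{W\leq V,\, V\leq z}\leq \BP{V>z}+\BP{W\leq z}=1-F_V(z)+F_W(z).
\]
Taking the infimum over $z$ yields $\BP{W\leq V}\leq 1-\alpha$, where $\alpha:=\sup_{z\in\R}(F_V(z)-F_W(z))$. This bound holds for every $V\sim F_V$, so it bounds the supremum.

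To realize the bound I would exploit atomlessness of $(\Omega,\BF,\PP)$ to write $W=F_W^{-1}(Z)$ a.s.\ for some $Z\sim U(0,1)$, and define $\widetilde V$ as in \eqref{optimal:sol:1}: on the ``success event'' $\{Z\leq 1-\alpha\}$ set $\widetilde V=F_V^{-1}(Z+\alpha)$, and on its complement set $\widetilde V=F_V^{-1}(1-Z)$. The success branch is tailored so that Lemma~\ref{lemma:inequality}, applied with $t=Z$, gives $\widetilde V=F_V^{-1}(Z+\alpha)\geq F_W^{-1}(Z)=W$ almost surely on $\{Z\leq 1-\alpha\}$. Thus $\BP{W\leq \widetilde V}\geq \PP(Z\leq 1-\alpha)=1-\alpha$, matching the upper bound once we check that $\widetilde V\sim F_V$.

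The distributional check proceeds by conditioning on the two branches. For each $z\in\R$, the success branch contributes $\BP{Z\leq F_V(z)-\alpha,\, Z\leq 1-\alpha}=\max\{0,F_V(z)-\alpha\}$ (using $F_V(z)-\alpha\leq 1-\alpha$), and the reflection branch contributes $\BP{1-Z\leq F_V(z),\, 1-Z<\alpha\}=\min\{F_V(z),\alpha\}$. The two pieces add to $F_V(z)$ exactly, which is the required identity.

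The main obstacle I anticipate is Lemma~\ref{lemma:inequality}, because it is the mechanism by which the prescribed ``gap'' $\alpha$ in the quantile shift is forced to be large enough to dominate $F_W^{-1}(t)$. The argument must be done under either continuity hypothesis separately: if $F_V$ is continuous, one sends $\epsilon\to0$ in $F_V(F_W^{-1}(t)-\epsilon)-F_W(F_W^{-1}(t)-\epsilon)\leq\alpha$ using $F_W(F_W^{-1}(t)-\epsilon)<t$; if $F_W$ is continuous, one sends $\epsilon\to 0$ in $F_V(F_W^{-1}(t)+\epsilon)-F_W(F_W^{-1}(t)+\epsilon)\leq\alpha$. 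In both cases one obtains $\alpha\geq F_V(F_W^{-1}(t))-t$, and then a short case analysis on whether equality holds plus the definition of the left-continuous inverse $F_V^{-1}$ delivers $F_V^{-1}(t+\alpha)\geq F_W^{-1}(t)$. Once this lemma is secured, everything else is bookkeeping.
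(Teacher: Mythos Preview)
Your proposal is correct and follows essentially the same path as the paper: the identical splitting argument for the upper bound, the same coupling $\widetilde V$ built from a uniform $Z$ with $W=F_W^{-1}(Z)$, the same application of Lemma~\ref{lemma:inequality} on the success branch, and the same two-branch computation verifying $\widetilde V\sim F_V$. Your sketch of the lemma's proof under each continuity assumption also matches the paper's argument.
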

\begin{coro}\label{coro1}
Under the same condition in \citethm{optimal1}, we have
\begin{align*}
\sup_{ V\sim F_{V}} \;\mathbb{P}(W\geq V) &=1-\sup_{z\in\R} \big(F_{W}(z)-F_{V}(z)\big);\\
\inf_{ V\sim F_{V}} \;\mathbb{P}(W< V) &=\sup_{z\in\R} \big(F_{W}(z)-F_{V}(z)\big);\\
\inf_{ V\sim F_{V}} \;\mathbb{P}(W> V) &=\sup_{z\in\R} \big(F_{V}(z)-F_W(z)\big).
\end{align*}
\end{coro}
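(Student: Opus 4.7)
The plan is to derive all three identities from Theorem B.1 together with the probabilistic identities displayed just before the corollary. The first formula is obtained by a reflection argument, while the second and third follow from taking complements.

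For the first identity, I would start from
\[\sup_{V\sim F_V}\mathbb{P}(W\geq V)=\sup_{V\sim F_V}\mathbb{P}(-W\leq -V),\]
and substitute $U=-V$, whose distribution function satisfies $F_{U}(z)=1-F_V(-z)$ under either continuity assumption. Writing $W'=-W$, whose distribution function likewise satisfies $F_{W'}(z)=1-F_W(-z)$, the supremum on the right reduces to
\[\sup_{U\sim F_{U}}\mathbb{P}(W'\leq U).\]
Since continuity of $F_V$ (respectively $F_W$) is inherited by $F_{U}$ (respectively $F_{W'}$), Assumption B.3 (respectively B.2) remains in force after the reflection, so Theorem B.1 applies and the expression equals $1-\sup_{z\in\R}(F_{U}(z)-F_{W'}(z))$. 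The identity $F_{U}(z)-F_{W'}(z)=F_W(-z)-F_V(-z)$ together with the change of variable $u=-z$ collapses this to $1-\sup_{u\in\R}(F_W(u)-F_V(u))$, which is the first claim.

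The remaining two identities then follow immediately. From $\mathbb{P}(W<V)=1-\mathbb{P}(W\geq V)$, taking $\inf_{V\sim F_V}$ on the left and invoking the first identity yields the second; from $\mathbb{P}(W>V)=1-\mathbb{P}(W\leq V)$, taking $\inf_{V\sim F_V}$ on the left and invoking Theorem B.1 directly yields the third. I do not anticipate any genuine obstacle here; the corollary amounts to a bookkeeping exercise combining Theorem B.1 with elementary set-theoretic dualities and a reflection, and the only minor subtlety, namely that the continuity hypothesis transfers under $x\mapsto -x$, is immediate.
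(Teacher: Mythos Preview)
Your overall architecture is right---reflect, apply Theorem B.1, then take complements---and it matches the paper's route. But there is a genuine gap in the reflection step. You write that $F_U(z)=1-F_V(-z)$ and $F_{W'}(z)=1-F_W(-z)$ hold ``under either continuity assumption''. That is not true. In general
\[
F_U(z)=\mathbb{P}(-V\leq z)=\mathbb{P}(V\geq -z)=1-F_V\big((-z)^{-}\big),
\]
which equals $1-F_V(-z)$ only when $F_V$ is continuous (Assumption~B.3), not under Assumption~B.2 alone; symmetrically, $F_{W'}(z)=1-F_W(-z)$ requires Assumption~B.2. Since the corollary only assumes \emph{one} of the two, your identity $F_U(z)-F_{W'}(z)=F_W(-z)-F_V(-z)$ does not hold as stated: one of the two terms involves a left limit, and after the change of variable you are left with
\[
\sup_{z\in\R}\big(\mathbb{P}(W<z)-\mathbb{P}(V<z)\big)
\]
rather than $\sup_{z}(F_W(z)-F_V(z))$.

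Showing that these two suprema coincide is precisely the nontrivial step, and the paper isolates it as the auxiliary identity~(B.1) and proves it via \citelem{uniformcontinuity} (an $\epsilon$-shift argument exploiting the uniform continuity of whichever distribution is assumed continuous). So the step you call ``immediate'' is in fact the only real content of the proof; once you supply that argument, your second and third identities follow exactly as you say.
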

\begin{proof}
In fact, we have
\begin{align*}
\sup_{ V\sim F_{V}} \;\mathbb{P}(W\geq V)&=\sup_{ V\sim F_{V}} \;\mathbb{P}(-W\leq -V)\\
&=1-\sup_{z\in\R} \big(\BP{-V\leq z}-\BP{-W\leq z}\big)\\
&=1-\sup_{z\in\R} \big(\BP{W< z}-\BP{V< z}\big)\\
&=1-\sup_{z\in\R} \big(F_{W}(z)-F_{V}(z)\big),
\end{align*}
provided that
\begin{align} \label{ineq1}
\sup_{z\in\R} \big(\BP{W< z}-\BP{V< z}\big)=\sup_{z\in\R} \big(F_{W}(z)-F_{V}(z)\big).
\end{align}
To show \eqref{ineq1}, suppose Assumption \ref{assumption:backgroundrisk3} holds. Then
\begin{align}
\sup_{z\in\R} \big(\BP{W< z}-\BP{V< z}\big) &\leq \sup_{z\in\R} \big(\BP{W\leq z}-\BP{V<z}\big)\nonumber\\
&=\sup_{z\in\R} \big(F_{W}(z)-F_{V}(z)\big). \label{ineq2}
\end{align}
On the other hand, for any $\epsilon>0$, we have
\begin{align*}
\sup_{z\in\R} \big(\BP{W< z}-\BP{V< z}\big) &=\sup_{z\in\R} \big(\BP{W<z+\epsilon}-\BP{V<z+\epsilon}\big)\\
&\geq\sup_{z\in\R} \big(F_{W}(z)-F_{V}(z+\epsilon)\big)\\
&\geq\sup_{z\in\R} \big(F_{W}(z)-F_{V}(z)\big)-\sup_{z\in\R} \big(F_{V}(z+\epsilon)-F_{V}(z)\big).
\end{align*}
Sending $\epsilon\to 0^{+}$, by \citelem{uniformcontinuity}, we deduce
\begin{align*}
\sup_{z\in\R} \big(\BP{W< z}-\BP{V< z}\big) &\geq\sup_{z\in\R} \big(F_{W}(z)-F_{V}(z)\big).
\end{align*}
This together with \eqref{ineq2} implies \eqref{ineq1}. Similarly, one can prove \eqref{ineq1} when Assumption \ref{assumption:backgroundrisk2} holds.

Moreover, we have
\begin{align*}
\inf_{ V\sim F_{V}} \;\mathbb{P}(W< V)=1-\sup_{ V\sim F_{V}} \;\mathbb{P}(W\geq V)=\sup_{z\in\R} \big(F_{W}(z)-F_{V}(z)\big),
\end{align*} 
and
\begin{align*}
\inf_{ V\sim F_{V}} \;\mathbb{P}(W> V)=1-\sup_{ V\sim F_{V}} \;\mathbb{P}(W\leq V)=\sup_{z\in\R} \big(F_{V}(z)-F_{W}(z)\big).
\end{align*} 
The proof is thus completed.
\end{proof}

\subsection{Problems in Class B}
\noindent
\begin{thm}\label{optimal2}
Under Assumption \ref{assumption:backgroundrisk3} or Assumption \ref{assumption:backgroundrisk2}, we have \begin{align}
\sup_{ V\sim F_{V}} \;\mathbb{P}(W\leq V)=\sup_{ V\sim F_{V}} \;\mathbb{P}(W<V).
\end{align}
\end{thm}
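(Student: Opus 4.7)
The trivial direction $\sup_{V\sim F_V}\mathbb{P}(W<V)\leq\sup_{V\sim F_V}\mathbb{P}(W\leq V)$ follows from $\{W<V\}\subseteq\{W\leq V\}$. For the reverse direction, set $\alpha:=\sup_{z\in\R}(F_V(z)-F_W(z))$, so that Theorem~\ref{optimal1} gives $\sup_{V\sim F_V}\mathbb{P}(W\leq V)=1-\alpha$. The plan is to build, for each $n\in\mathbb{N}$, a random variable $V_n^\star\sim F_V$ with $\mathbb{P}(W<V_n^\star)\geq 1-\alpha_n$ for some $\alpha_n\to\alpha$, using a ``shift by $1/n$'' perturbation that converts a weak inequality into a strict one.

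Under Assumption~\ref{assumption:backgroundrisk3}, my plan is to shift the $V$-marginal. Define $F_V^{(n)}(z):=F_V(z+1/n)$, which is continuous since $F_V$ is, and which is precisely the cdf of $V-1/n$ when $V\sim F_V$. Applying Theorem~\ref{optimal1} with $F_V^{(n)}$ in place of $F_V$ yields a random variable $V_n\sim F_V^{(n)}$ with $\mathbb{P}(W\leq V_n)=1-\sup_z(F_V(z+1/n)-F_W(z))$. Setting $V_n^\star:=V_n+1/n\sim F_V$, I obtain $\{W\leq V_n\}\subseteq\{W<V_n^\star\}$ because $V_n^\star-V_n=1/n>0$. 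Lemma~\ref{uniformcontinuity} applied to the continuous $F_V$ forces $\sup_z|F_V(z+1/n)-F_V(z)|\to 0$, whence $\sup_z(F_V(z+1/n)-F_W(z))\to\alpha$, closing the argument in this case.

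Under Assumption~\ref{assumption:backgroundrisk2}, the symmetric plan shifts the $W$-marginal instead: set $F_W^{(n)}(z):=F_W(z-1/n)$, which is continuous because $F_W$ is, and apply Theorem~\ref{optimal1} to produce a coupling $(\tilde W_n,\tilde V_n)$ with $\tilde W_n\sim F_W^{(n)}$, $\tilde V_n\sim F_V$, and $\mathbb{P}(\tilde W_n\leq\tilde V_n)=1-\sup_z(F_V(z)-F_W(z-1/n))$. In the construction underlying Theorem~\ref{optimal1}, $\tilde V_n$ is a measurable function $\phi_n$ of the auxiliary uniform variable $Z=F_W^{(n)}(\tilde W_n)=F_W(\tilde W_n-1/n)$, so I can transport the coupling to the original probability space by setting $V_n^\star:=\phi_n(F_W(W))\sim F_V$; this gives $(W,V_n^\star)\stackrel{d}{=}(\tilde W_n-1/n,\tilde V_n)$. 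Since $\tilde W_n\leq\tilde V_n$ forces $\tilde W_n-1/n<\tilde V_n$, I conclude $\mathbb{P}(W<V_n^\star)\geq 1-\sup_z(F_V(z)-F_W(z-1/n))$, and Lemma~\ref{uniformcontinuity} applied to $F_W$ again drives the right-hand side to $1-\alpha$.

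The main obstacle I anticipate is ensuring the perturbed random variable has exactly the marginal $F_V$, and that the coupling can be realized on the original atomless space with the prescribed $W$. In the first case this is handled by the clean arithmetic shift $V_n\mapsto V_n+1/n$; in the second case it is handled by exploiting the explicit functional representation of Theorem~\ref{optimal1}'s optimizer through the auxiliary uniform variable, which enables transport to $(\Omega,\BF,\mathbb{P})$. After that, the convergence of the shifted suprema to $\alpha$ is a direct application of Lemma~\ref{uniformcontinuity}.
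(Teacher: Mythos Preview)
Your proof is correct and follows essentially the same approach as the paper: introduce an $\epsilon$-shift to turn the weak inequality into a strict one, invoke Theorem~\ref{optimal1} on the shifted problem, and close the gap via Lemma~\ref{uniformcontinuity}. The paper's version is a bit more economical in that it works directly with the optimal-value formula $\sup_{V\sim F_V}\mathbb{P}(W\leq V-\epsilon)=1-\sup_z\big(F_V(z+\epsilon)-F_W(z)\big)$ rather than constructing explicit near-optimizers; in particular, under Assumption~\ref{assumption:backgroundrisk2} it simply rewrites $\sup_z\big(F_V(z+\epsilon)-F_W(z)\big)=\sup_z\big(F_V(z)-F_W(z-\epsilon)\big)$ and applies Lemma~\ref{uniformcontinuity} to $F_W$, which spares you the coupling-transport step in your second case.
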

\begin{proof}
By the evident inequalities
\begin{align*}
\sup_{ V\sim F_{V}} \;\mathbb{P}(W\leq V)\geq \sup_{ V\sim F_{V}} \;\mathbb{P}(W<V)\geq\lim_{\epsilon\to 0^{+}}\sup_{ V\sim F_{V}} \;\mathbb{P}(W\leq V-\epsilon),
\end{align*}
it suffices to show
\begin{align}\label{B:inequality}
\lim_{\epsilon\to 0^{+}}\sup_{ V\sim F_{V}}\;\mathbb{P}(W\leq V-\epsilon)\geq\sup_{ V\sim F_{V}}\;\mathbb{P}(W\leq V).
\end{align}
Suppose Assumption \ref{assumption:backgroundrisk3} holds, then by \citethm{optimal1}, we obtain
\begin{align*}
\lim_{\epsilon\to 0^{+}}\sup_{ V\sim F_{V}} \;\mathbb{P}(W\leq V-\epsilon)
&=\lim_{\epsilon\to 0^{+}}\left(1-\sup_{z\in\R} \big(F_{V}(z+\epsilon)-F_W(z)\big)\right)\\
&=1-\lim_{\epsilon\to 0^{+}}\sup_{z\in\R} \big(F_{V}(z+\epsilon)-F_W(z)\big)\\
&\geq 1-\lim_{\epsilon\to 0^{+}}\Big(\sup_{z\in\R} \big(F_{V}(z)-F_W(z)\big)+\sup_{z\in\R} \big(F_{V}(z+\epsilon)-F_{V}(z)\big)\Big)\\
&\geq 1-\sup_{z\in\R} \big(F_{V}(z)-F_W(z)\big)-\lim_{\epsilon\to 0^{+}}\sup_{z\in\R} \big(F_{V}(z+\epsilon)-F_{V}(z)\big)\\
&=1-\sup_{z\in\R} \big(F_{V}(z)-F_W(z)\big)\\
&=\sup_{ V\sim F_{V}} \;\mathbb{P}(W\leq Y).
\end{align*}
where the third equality follows from Lemma \ref{uniformcontinuity}. The same inequality \eqref{B:inequality} can be similarly proved under Assumption \ref{assumption:backgroundrisk2}. The claim is thus proved.
\end{proof}

By using similar arguments as above, we can establish the following corollary.
\begin{coro}\label{coro:prob:equality}
	Under Assumption \ref{assumption:backgroundrisk3} or Assumption \ref{assumption:backgroundrisk2}, we have
	\begin{align*}
	\sup_{ V\sim F_{V}} \;\mathbb{P}(W\geq V) &=\sup_{ V\sim F_{V}} \;\mathbb{P}(W> V);\\
	\inf_{ V\sim F_{V}} \;\mathbb{P}(W< V) &=\inf_{ V\sim F_{V}} \;\mathbb{P}(W\leq V);\\
	\inf_{ V\sim F_{V}} \;\mathbb{P}(W> V)&=\inf_{ V\sim F_{V}} \;\mathbb{P}(W\geq V).
	\end{align*} 
\end{coro}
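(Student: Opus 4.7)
The plan is to derive all three equalities from \citethm{optimal2} by invoking the two elementary transformations (sign-flipping and complementation) that were spelled out at the beginning of Section~\ref{AppendixB} when reducing the eight problems in Classes~A and~B to just two representative ones. The hard work has already been done in \citethm{optimal2}; no new $\epsilon$-estimate is needed.

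For the first equality, I would observe that $\sup_{V\sim F_V}\mathbb{P}(W\geq V) = \sup_{V\sim F_V}\mathbb{P}(-W\leq -V)$ and $\sup_{V\sim F_V}\mathbb{P}(W> V) = \sup_{V\sim F_V}\mathbb{P}(-W< -V)$, because negating $V$ sets up a bijection between distributions with marginal $F_V$ and distributions with the corresponding reflected marginal. Continuity of the CDF is preserved under sign-flipping, since the CDF of $-Y$ at $z$ equals $1-F_Y((-z)^-)$; hence whichever of Assumption~\ref{assumption:backgroundrisk2} or Assumption~\ref{assumption:backgroundrisk3} holds for $(W,V)$ still holds for $(-W,-V)$. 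Applying \citethm{optimal2} to the pair $(-W,-V)$ gives $\sup\mathbb{P}(-W\leq -V)=\sup\mathbb{P}(-W<-V)$, which translates back to $\sup\mathbb{P}(W\geq V)=\sup\mathbb{P}(W>V)$.

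For the second equality, I would combine the just-proved first equality with the complementation identities $\inf_{V\sim F_V}\mathbb{P}(W<V)=1-\sup_{V\sim F_V}\mathbb{P}(W\geq V)$ and $\inf_{V\sim F_V}\mathbb{P}(W\leq V)=1-\sup_{V\sim F_V}\mathbb{P}(W>V)$, which hold for arbitrary joint laws. For the third equality, I would use $\inf_{V\sim F_V}\mathbb{P}(W>V)=1-\sup_{V\sim F_V}\mathbb{P}(W\leq V)$ and $\inf_{V\sim F_V}\mathbb{P}(W\geq V)=1-\sup_{V\sim F_V}\mathbb{P}(W<V)$ together with \citethm{optimal2} itself, which gives the equality of the two suprema on the right-hand side.

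There is no real obstacle; the only point that requires a brief verification is that the continuity hypothesis transports under $Y\mapsto -Y$, which is immediate from the formula for the CDF of $-Y$. Alternatively, one could mimic the $\epsilon$-perturbation argument of \citethm{optimal2} directly (using $V+\epsilon$ in place of $V-\epsilon$ for the strict-to-non-strict direction of the infimum equalities, and invoking \citelem{uniformcontinuity} together with the closed-form expressions from \citecoro{coro1}), but routing everything through the already-proved \citethm{optimal2} is cleaner and avoids redoing essentially the same estimates three more times.
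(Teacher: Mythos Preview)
Your proposal is correct. The paper's own proof is the one-line ``by using similar arguments as above'', i.e.\ it intends the reader to rerun the $\epsilon$-perturbation estimate of \citethm{optimal2} (together with \citelem{uniformcontinuity} and the closed forms in \citecoro{coro1}) three more times. Your route---sign-flipping $(W,V)\mapsto(-W,-V)$ to obtain the first equality from \citethm{optimal2}, then complementation to get the second from the first and the third directly from \citethm{optimal2}---is a genuinely different and cleaner reduction: it reuses the already-proved result instead of repeating its proof. The only point requiring care, which you addressed, is that continuity of a CDF is preserved under $Y\mapsto -Y$ (so whichever of Assumptions~\ref{assumption:backgroundrisk2} or~\ref{assumption:backgroundrisk3} holds for $(W,V)$ still holds for $(-W,-V)$); with that, \citethm{optimal2} applies verbatim to the reflected pair. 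You also correctly note the direct $\epsilon$-argument as an alternative, which is precisely what the paper has in mind.
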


\section{Optimal contracts with a comonotonic background risk}\label{section:comono}

In this subsection, we consider the case in which the background risk $Y$ is comonotonic with the insurable risk $X$. More precisely, we assume $Y=h(X)$, where $h$ is a continuous and increasing function. For simplicity, we further assume that $F_X(x)$ is continuous and strictly increasing on $[0,M]$ and that the distortion function $g$ is continuous and strictly increasing. Now the optimal reinsurance design problem with a comonotonic background risk $Y=h(X)$ becomes
\begin{align}\label{original:problem:robustcheck}
\sup_{I\in\mathcal{I}}&\;\BP{w_0-h(X)-X+I(X)-\pi^g(I(X)) \geq \xi}.
\end{align}
It is obvious that the objective of Problem \eqref{original:problem:robustcheck} can be rewritten as $$\BP{h(X)+R(X)+\pi^g(I(X)) \leq w_0-\xi}.$$
If $h(M)+z_0+\pi^g((X-z_0)_+)\leq w_0-\xi$, where $z_0$ is defined in \eqref{z_0}, then it can be easily shown that an optimal solution to Problem \eqref{original:problem:robustcheck} is $(x-z_0)_+$ and the corresponding objective value is $1$. Else if $w_0-\xi<h(0)$, then the objective value is zero for any admissible reinsurance strategy. To avoid these trivial cases in the subsequent analysis, we make the following assumption:
\begin{assmp}\label{comonotonic:assumption1}
$h(M)+z_0+\pi^g((X-z_0)_+)> w_0-\xi\geq h(0).$
\end{assmp}

For any given $I\in \mathcal{I}$ (or equivalently, $R\in \setr$), we will show that there exists a stop-loss contract with an upper limit that is better than $I$. More specifically, if $h(0)> w_0-\xi-\pi^g(I(X))$, then the increasing property of $h$ and $R$ implies that the objective value for $I$ is $0$, and thus $I$ is dominated by any feasible contract.

Otherwise, if $h(0)\leq w_0-\xi-\pi^g(I(X))$, we denote by $$\bar{x}:=\sup \big\{t\in [0,M]:h(t)+R(t)\leq w_0-\xi-\pi^g(I(X))\big\}.$$
Notice that, $\bar{x}$ is well-defined and should be strictly less than $M$. Otherwise, if $\bar{x}=M$, then we have $$h(M)+z_0+\pi^g((X-z_0)_+)\leq h(M)+R(M)+\pi^g((X-R(M))_+)\leq h(M)+R(M)+\pi^g(I(X))\leq w_0-\xi,$$ where the first inequality follows from the fact that $$z+\pi^g((X-z)_+)=z+(1+\varrho)\int_{z}^{M} g(S_X(t))\dt$$ achieves the minimal value at $z=z_0$ based on the definition of $z_0$, and the second inequality is due to $(X-R(M))_+ \leq X-R(X)=I(X)$. This leads to a contradiction with Assumption \ref{comonotonic:assumption1}. Furthermore, recalling that both $h$ and $R$ are increasing and continuous and that $F_X(x)$ is continuous and strictly increasing, we should have $$\BP{h(X)+R(X) \leq w_0-\xi-\pi^g(I(X))}=\BP{X \leq \bar{x}}.$$

Denote
\begin{equation}\label{wilde:Rt}
\widetilde{R}(x):=\left\{
\begin{array}{ll}
\min\{x,R(\bar{x})\}, &\quad 0 \leq x\leq \bar{x};\\
R(\bar{x})+(x-\bar{x}),&\quad \bar{x}<x\leq M.
\end{array}
\right.
\end{equation}
It is easy to see that $\widetilde{R}\in\setr$ and $$\widetilde{R}(x)\geq R(x)\quad \text{and}\quad \widetilde{R}(\bar{x})=R(\bar{x}),$$ which in turn implies that $\pi^g(\widetilde{I}(X))\leq \pi^g(I(X))$, where $\widetilde{I}(x)=x-\widetilde{R}(x)$.
Moreover,
\begin{align*}
\BP{w_0-h(X)-X+\widetilde{I}(X)-\pi^g(\widetilde{I}(X)) \geq \xi}&=\BP{h(X)+\widetilde{R}(X)+\pi^g(\widetilde{I}(X)) \leq w_0-\xi} \\
&\geq \BP{h(X)+\widetilde{R}(X)+\pi^g(I(X)) \leq w_0-\xi} \\
&\geq \BP{h(X)+R(X)+\pi^g(I(X)) \leq w_0-\xi} \\
&=\BP{X\leq \bar{x}}.
\end{align*}
Therefore, the retained loss function $\widetilde{R}(x)$ in \eqref{wilde:Rt} dominates $R$ in terms of objective value.

The above analysis demonstrates that an optimal retained loss function $R^*$ can be in the form of $R_{a,b}(x)$, where $R_{a,b}(x)$ is defined in \eqref{optimal:R} and
$a$ and $b$ satisfy
\begin{align}\label{optimal:inequality}
L(a,b):=h(b)+a+(1+\varrho)\int_{a}^{b} g(S_X(t))\dt \leq w_0-\xi.
\end{align}
Note that $L(0,0)=h(0)\leq w_0-\xi$ under Assumption~\ref{comonotonic:assumption1}.

\begin{lemma}\label{lemma:comonotonic}
At least for an optimal contract $R_{a,b}(x)$, the attachment point $a$ and the detachment point $b$ can satisfy
\begin{align}\label{equation:optimalb}
L(a,b)=w_0-\xi.
\end{align}
\end{lemma}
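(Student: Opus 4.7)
The plan is to show that if the given optimal pair $(a^*, b^*)$ satisfies the strict inequality $L(a^*, b^*) < w_0 - \xi$, then a modification yields another optimal pair achieving $L = w_0 - \xi$ (the equality case needs no argument). Under the strict inequality, $h(x) + R_{a^*, b^*}(x) + \pi^g(I_{a^*, b^*}(X))$ is strictly below $w_0 - \xi$ at $x = b^*$, so the threshold $\bar{x}^*$ introduced before the lemma satisfies $\bar{x}^* > b^*$ and, in the upper-layer regime, is determined by
\[
h(\bar{x}^*) + \bar{x}^* \;=\; w_0 - \xi + h(b^*) + b^* - L(a^*, b^*),
\]
so the attained objective value is $F_X(\bar{x}^*)$.

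The next step is to consider the candidate pair $(a^*, \bar{x}^*)$. Expanding $L(a^*, \bar{x}^*)$ by its definition and substituting the displayed equation for $h(\bar{x}^*) - h(b^*)$ produces the key identity
\[
L(a^*, \bar{x}^*) \;=\; w_0 - \xi + \int_{b^*}^{\bar{x}^*} \bigl[(1+\varrho)\, g(S_X(t)) - 1\bigr]\,\dt,
\]
so that the comparison between $L(a^*, \bar{x}^*)$ and $w_0 - \xi$ reduces to the sign of the integrand on $[b^*, \bar{x}^*]$, which is controlled by the position of $b^*$ relative to $z_0$.

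The hard part is to argue that, at any optimum with $a^* < b^*$, one must have $a^* \geq z_0$. I would establish this by perturbation: if $a^* < z_0$, then by the strict monotonicity of $g$ and $S_X$ together with the definition of $z_0$, the integrand $(1+\varrho) g(S_X(t)) - 1$ is strictly positive on a right-neighborhood of $a^*$. Hence for small $\epsilon > 0$, a direct calculation gives $L(a^* + \epsilon, b^*) < L(a^*, b^*)$; applying the analogue of the displayed equation to the pair $(a^* + \epsilon, b^*)$ then shows that its threshold strictly exceeds $\bar{x}^*$, improving the objective and contradicting the optimality of $(a^*, b^*)$.

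With $a^* \geq z_0$ in hand, $[b^*, \bar{x}^*] \subseteq [z_0, M]$ and the integrand in the key identity is non-positive, whence $L(a^*, \bar{x}^*) \leq w_0 - \xi$. Since $b \mapsto L(a^*, b)$ is continuous and strictly increasing on $[a^*, M)$ with $L(a^*, M) \geq L(z_0, M) > w_0 - \xi$ by \citeassmp{comonotonic:assumption1}, the intermediate value theorem furnishes $b^{**} \in [\bar{x}^*, M)$ with $L(a^*, b^{**}) = w_0 - \xi$. For this new pair, the threshold equals $b^{**}$ itself, so the objective is $F_X(b^{**}) \geq F_X(\bar{x}^*)$; optimality of $(a^*, b^*)$ forces equality, and $(a^*, b^{**})$ is the desired optimal pair satisfying $L = w_0 - \xi$. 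The degenerate case $a^* = b^*$ (no reinsurance) is handled by reparameterizing the identity retained-loss function as $(a', a')$ with $a'$ chosen to solve $h(a') + a' = w_0 - \xi$, which is solvable by the intermediate value theorem in the non-degenerate range covered by \citeassmp{comonotonic:assumption1}.
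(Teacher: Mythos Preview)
Your argument is correct but follows a different path from the paper's. The paper dispatches the strict-inequality case in a single stroke: letting $c$ denote the threshold (your $\bar{x}^*$), it considers the right-shifted pair $(a+c-b,\,c)$, which has the same layer width $b-a$; since $g\circ S_X$ is strictly decreasing, $\int_{a+c-b}^{c} g(S_X(t))\,\dt < \int_a^b g(S_X(t))\,\dt$, and combining this with the defining equation for $c$ gives $L(a+c-b,\,c) < w_0-\xi$. The threshold for the shifted pair therefore strictly exceeds $c$, improving the objective and contradicting optimality. This shows directly that \emph{every} optimal pair satisfies the equality, and it never mentions $z_0$. Your route---first pinning down $a^*\geq z_0$ by a local perturbation in the attachment point, then using your key identity for $L(a^*,\bar{x}^*)$ together with the intermediate value theorem to extend the detachment point to some $b^{**}\geq \bar{x}^*$ with $L(a^*,b^{**})=w_0-\xi$---is sound, but it imports the structural fact about the optimal attachment sitting at or above $z_0$ as an auxiliary step and only produces \emph{some} optimal pair with the equality (which is all the lemma requires). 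The paper's shift construction is shorter and yields the stronger conclusion.
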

\begin{proof}
Denote $$c:=\sup \big\{t\in [0,M]: h(t)+R_{a,b}(t)\leq w_0-\xi-\pi^g(I_{a,b}(X)) \big\},$$ then we have $c\geq b$ according to \eqref{optimal:inequality}. Moreover, $c$ cannot be equal to $M$; otherwise, we would have $$w_0-\xi\geq h(M)+a+M-b+\pi^g(I_{a,b}(X))\geq h(M)+z_0+\pi^g((X-z_0)_+),$$ where the second inequality follows from the fact that $$a+M-b+(1+\varrho)\int_{a}^{b} g(S_X(t))\dt\geq M \geq z_0+(1+\varrho)\int_{z_0}^{M} g(S_X(t))\dt$$ when $b\leq z_0$ and that $$a+M-b+(1+\varrho)\int_{a}^{b} g(S_X(t))\dt\geq z_0+M-b+(1+\varrho)\int_{z_0}^{b} g(S_X(t))\dt \geq z_0+(1+\varrho)\int_{z_0}^{M} g(S_X(t))\dt$$ when $b>z_0.$ This contradicts Assumption \ref{comonotonic:assumption1}.
Furthermore, $c$ should satisfy $$\BP{h(X)+R_{a,b}(X)+\pi^g(I_{a,b}(X)) \leq w_0-\xi}=\BP{X \leq c}$$ and $$h(c)+R_{a,b}(c)+(1+\varrho)\int_{a}^{b} g(S_X(t))\dt=w_0-\xi.$$ If $c=b$, then \eqref{equation:optimalb} automatically holds. Otherwise, if $c\in(b,M]$, then $R_{a,b}(c)=a+c-b$. We can show that $R_{a+c-b,c}(x)$ is strictly better than $R_{a,b}(x)$. This is due to $$h(c)+R_{a+c-b,c}(c)+(1+\varrho)\int_{a+c-b}^{c} g(S_X(t))\dt< w_0-\xi,$$ where the inequality is derived by the fact that $\int_{a+c-b}^{c} g(S_X(t))\dt< \int_{a}^{b} g(S_X(t))\dt$. Thus, the case of $c\in(b,M]$ cannot occur.

\end{proof}

Lemma \ref{lemma:comonotonic} indicates that there must exist an optimal solution $I_{a,b}(x)$ (or equivalently, $R_{a,b}(x)$) to Problem \eqref{original:problem:robustcheck}, where $a$ and $b$ satisfy $L(a,b)=w_0-\xi$. Moreover, the corresponding objective value is $F_X(b)$. Therefore, to solve Problem \eqref{original:problem:robustcheck}, we only need to find a maximal $b$ such that $b\geq a$ and $L(a,b)=w_0-\xi$. Notice that, $L(a,b)$ is increasing in $b$ and
\[
L(a,b)\geq \left\{\begin{array}{ll}
L(b,b), &b\leq z_0;\\
L(z_0, b), & b>z_0.
\end{array}\right.
\]
Combining all the above results, we are able to obtain the following result.
\begin{thm}\label{thm:C1}
	Under Assumption \ref{comonotonic:assumption1}, we have
	\begin{enumerate}[label=(\roman*)]
\item If $h(z_0)+z_0\geq w_0-\xi$, then no reinsurance is optimal to Problem \eqref{original:problem:robustcheck}.
\item Otherwise, if $h(z_0)+z_0< w_0-\xi$, then a stop-loss contract with an upper limit
	\begin{equation*}
I^*_{z_0,b^*}(x)=\left\{
\begin{array}{ll}
0, &\quad 0 \leq x\leq z_0;\\
x-z_0,&\quad z_0<x\leq b^*;\\
b^*-z_0,&\quad b^*<x,
\end{array}
\right.
\end{equation*}
is optimal to Problem \eqref{original:problem:robustcheck}, where $b^*$ is determined by $L(z_0,b^*)=w_0-\xi$.
\end{enumerate}
\end{thm}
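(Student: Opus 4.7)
The plan is to use the reduction carried out just before the statement: any optimal contract has the form $R_{a,b}$ with $L(a,b) = w_0 - \xi$ and $a \leq b$, and the associated objective value is $F_X(b)$, so the theorem amounts to maximising $b \in [0, M]$ over the set of feasible pairs, with $a$ treated as a slack variable in the implicit equation $L(a,b) = w_0 - \xi$. I would begin by analysing the shape of $L(\cdot, b)$ for fixed $b$. Differentiating gives $\partial_a L(a,b) = 1 - (1+\varrho)\, g(S_X(a))$, which, by the definition of $z_0$ in \eqref{z_0}, is non-positive on $[0, z_0]$ and non-negative on $[z_0, M]$. Consequently $a \mapsto L(a, b)$ is unimodal with minimum attained at $a = \min\{b, z_0\}$, and this minimum equals
\[
\Phi(b) \;:=\; \begin{cases} h(b) + b, & b \leq z_0, \\[2pt] h(b) + z_0 + (1+\varrho)\int_{z_0}^b g(S_X(t))\,dt, & b > z_0. \end{cases}
\]
Continuity and strict monotonicity of $h$, $g$, and $F_X$ make $\Phi$ continuous and strictly increasing on $[0, M]$.

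Next I would convert the feasibility of $b$ into the single inequality $\Phi(b) \leq w_0 - \xi$: the range of $a \mapsto L(a,b)$ on $[0,b]$ is an interval with left endpoint $\Phi(b)$, and this interval contains $w_0 - \xi$ precisely when $\Phi(b) \leq w_0 - \xi$ (the right endpoint $\max\{L(0,b),L(b,b)\}$ is automatically large enough for any $b$ that can be a candidate maximiser). Strict monotonicity of $\Phi$ then tells me that the largest feasible $b$ solves $\Phi(b^*) = w_0 - \xi$. Existence is supplied by \citeassmp{comonotonic:assumption1}, since $\Phi(0) = h(0) \leq w_0 - \xi$ and $\Phi(M) = h(M) + z_0 + \pi^g((X - z_0)_+) > w_0 - \xi$; the intermediate value theorem then yields a unique $b^* \in (0, M)$ with $\Phi(b^*) = w_0 - \xi$.

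The case split in the theorem now falls out of comparing $\Phi(z_0) = h(z_0) + z_0$ with $w_0 - \xi$. In Case (i), $\Phi(z_0) \geq w_0 - \xi$ forces $b^* \in [0, z_0]$, and the equation $h(b^*) + b^* = w_0 - \xi$ pins down $b^*$; since $b^* \leq z_0$ the minimiser in $a$ is $a^* = b^*$, so $R_{b^*,b^*}(x) \equiv x$, i.e.\ no reinsurance is optimal. In Case (ii), $\Phi(z_0) < w_0 - \xi$, so $b^* \in (z_0, M)$ solves $L(z_0, b^*) = w_0 - \xi$, and the corresponding minimiser is $a^* = z_0$, producing exactly the layer contract $I^*_{z_0, b^*}$ in the statement.

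The genuine analytic work has already been absorbed into the reduction to the two-parameter family $I_{a,b}$ and into \citelem{lemma:comonotonic}, which forces the equality constraint $L(a,b) = w_0 - \xi$. What remains is truly one-dimensional, so the only places where I would be careful are bookkeeping points: verifying that $R_{b,b}(x) \equiv x$ correctly encodes no reinsurance in Case (i), and that \citeassmp{comonotonic:assumption1} rules out the degenerate configuration $z_0 = M$ in Case (ii). I do not anticipate any substantive obstacle beyond these checks.
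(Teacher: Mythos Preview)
Your proposal is correct and follows essentially the same route as the paper: minimise $L(\cdot,b)$ over $a\in[0,b]$ (with minimiser $a=\min\{b,z_0\}$), identify this minimum with your $\Phi(b)$, and then pick the largest $b$ for which $\Phi(b)\leq w_0-\xi$, splitting cases according to whether $b^*\leq z_0$ or $b^*>z_0$. The paper leaves these final steps implicit after displaying the inequality $L(a,b)\geq L(\min\{b,z_0\},b)$, whereas you spell them out via the intermediate value theorem; one small correction is that $h$ is only assumed (non-strictly) increasing, but strict monotonicity of $\Phi$ still holds because $\Phi'(b)\geq 1$ on $[0,z_0]$ and $\Phi'(b)\geq (1+\varrho)g(S_X(b))>0$ on $(z_0,M)$.
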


\end{document}